\begin{document}

\title{On the growth rates of polyregular functions}
\author{Mikołaj Bojańczyk (University of Warsaw)}
\titlenote{This is the author's version of a LICS 2023 paper.}

\begin{abstract}
    We consider polyregular functions, which are certain string-to-string functions that have polynomial output size. We prove that  a polyregular function has output size $\Oo(n^k)$ if and only if it can be defined by an \mso interpretation of dimension $k$, i.e.~a string-to-string transformation where every output position is interpreted, using monadic second-order logic \mso, in some $k$-tuple of input positions. We also show that this characterization does not extend to pebble transducers, another model for describing polyregular functions: we show that for every $k \in \set{1,2,\ldots}$ there is a polyregular function of quadratic output size which needs at least $k$ pebbles to be computed.
\end{abstract}

    
    \ccsdesc[500]{Theory of computation~Regular languages}

\keywords{Transductions, monadic second-order logic}
\maketitle


\section{Introduction}
Polyregular functions are a class of string-to-string functions with polynomial growth. Examples of polyregular functions include
\begin{align*}
 \myunderbrace{123 \mapsto 123123}{duplicate} \qquad 
 \myunderbrace{123 \mapsto 123123123}{square}.
\end{align*}
Among many equivalent models defining the polyregular functions, see~\cite{polyregular-survey}, in this paper we work mainly with two models, namely \mso interpretations~\cite[Definition 2]{msoInterpretations} and pebble transducers~\cite[Section 3.1]{DBLP:journals/jcss/MiloSV03}. In an \mso interpretation, the output string is defined using \mso formulas based on the input string, with each position of the output string represented as a $k$-tuple of positions in the input string. A pebble transducer is an extension of a two-way transducer, which instead of a single head, has a stack of at most $k$ pebbles. For both models, it is clear from the definition that if the input has size $n$, then the output size is $\Oo(n^k)$. The purpose of this paper is to investigate if the various numbers $k$ discussed above are really the same number. This corresponds to studying the relationships between the following three hierarchies:


\begin{itemize}
 \item {\bf The growth rate hierarchy}. A polyregular function is in the $k$-th level of this hierarchy if its growth rate is $\Oo(n^k)$. Here, the \enquote{growth rate} of a string-to-string function is the function that maps an input length $n \in \set{0,1,\ldots}$ to the maximal size of an output that can be produced for inputs of length at most $n$.
 \item {\bf The dimension hierarchy.} A polyregular function is in the $k$-th level of this hierarchy if it can be defined by an \mso interpretation of dimension $k$, which means that every output position is represented as a tuple of at most $k$ input positions.
 \item {\bf The pebble hierarchy.} A polyregular function is in the $k$-th level of this hierarchy if it can be computed by a pebble transducer that uses a stack of at most $k$ pebbles\footnote{\label{footnote:headcount}Following~\cite{bojanczykPolyregularFunctions2018,lhote2020}, we use the convention that the head of a pebble transducer is counted as a pebble, which means that two-way transducers are one-pebble transducers. Some papers~\cite{engelfriet2002two,DBLP:journals/acta/Engelfriet15,Marble,engelfriet2007xml} do not count the head as a pebble, their $k$-pebble transducers are our $(k+1)$-pebble transducers. The motivation for our choice is that we want the output size of a $k$-pebble transducer to be $\Oo(n^k)$. Also, in our notation there is a meaningful notion of $0$-pebble transducers, which has constant output size $\Oo(1)$; in the alternative notation this notion would need a negative number of pebbles.}.
\end{itemize}
 One can also discuss other hierarchies, e.g.\ the number of nested loops in a for-transducer, but in this paper, we focus on the three hierarchies described above. The natural inclusions between the hierarchies are:
\[
\begin{tikzcd}
\text{$f$ is recognized by a $k$-pebble transducer}
\ar[d,Rightarrow, "\text{\cite[Lemma 2.3]{bojanczykPolyregularFunctions2018}}"]\\
\text{$f$ is defined by an \mso interpretation of dimension $k$}
\ar[d,Rightarrow,"\text{the number of configurations is $\Oo(n^k)$}" ]\\
\text{$f$ is polyregular and has growth rate $\Oo(n^k)$}
\end{tikzcd}
\]
The main results of this paper are:
\begin{itemize}
 \item In Section~\ref{sec:growth-rate-mso-int}, we show that the lower implication is an equivalence, i.e.~the hierarchies for growth rate and dimension are the same, level by level. Furthermore, this hierarchy is computable, i.e.~given a polyregular function one can compute its level $k$ in the growth rate (or equivalently, dimension) hierarchy. In particular, one can check if $k=1$, i.e.~if the function is regular.

 \item In Section~\ref{sec:lower-bound}, we show that the upper implication is not an equivalence, because the pebble hierarchy is slower than the growth rate (or, equivalently, dimension) hierarchy. The hierarchies agree at level $k=1$, but not beyond: for every $k$ there is a polyregular function of quadratic growth which needs at least $k$ pebbles. This result corrects an error in~\cite{lhote2020}, which claimed that all three hierarchies are equal. 

\end{itemize}

As mentioned above, for level $k=1$ all three hierarchies coincide. The functions on level 1 are a widely studied class of transducers, which can be described by any of the following equivalent models: two-way automata with output~\cite[Note 4]{shepherdson1959reduction}, streaming string transducers~\cite[Section 2.2]{alurCerny11}, string-to-string \mso transductions~\cite[Definition 2]{engelfrietMSODefinableString2001}, regular list functions~\cite[Section 6]{bojanczykRegularFirstOrderList2018}, Church encodings in a linear $\lambda$-calculus~\cite[Theorem~1.2.3]{titoThesis}, etc.\ -- see~\cite{muscholl2019many} for a survey. The two-way automata with output are the same as 1-pebble transducers, while the \mso transductions are the same as \mso interpretations of dimension 1. These models have been shown to be equivalent by Englefriet and Hoogeboom~\cite[Theorem 13]{engelfrietMSODefinableString2001}; this means that for $k=1$ the pebble and dimension hierarchies coincide. Since, as we show in this paper, the dimension and growth rate hierarchies coincide for all $k$, it follows that in the special case of $k=1$, the polyregular functions of linear growth are exactly those that can be defined by \mso transductions, two-way automata with output, and their equivalent models. For this reason, we use the name \emph{linear regular functions} for level $k=1$ of these hierarchies.

Apart from~\cite{lhote2020}, the relationship between the number of pebbles and the growth rate was previously studied for special cases of polyregular functions, namely for comparison-free pebble transducers~\cite[Theorem 7.1]{NguyenNP21} and for marble transducers~\cite[Section~5]{Marble}. In both of these special cases, the pebble hierarchy does coincide with the growth rate hierarchy; unlike the situation for general pebble transducers that we describe in the present paper.

 \paragraph*{Acknowledgement.} This work was financially supported by the Leverhulme Trust, and the Polish National Agency for Academic Exchange. I would also like to thank my colleagues Ga\"etan Doueneau-Tabot, Sandra Kiefer, L{\^{e}} Th{\`{a}}nh D{\~u}ng Nguy{\~{\^{e}}}n and C\'ecilia Pradic for motivating this work, many stimulating discussions, and extensive corrections for drafts of this paper. This paper would not have been possible without their help.

\section{Growth rate for \mso interpretations}
\label{sec:growth-rate-mso-int}
The section is based on string-to-string \mso interpretations,  one of the equivalent models defining polyregular functions.  In this section, we prove that the growth rate and dimension hierarchies are the same; which implies that a polyregular function has output size $\Oo(n^k)$ if and only if it can be defined by an \mso interpretation that represents output positions using tuples of input positions that have length at most $k$. The proof is based on a detailed analysis of the \mso formulas that are used to define an \mso interpretation. The analysis will be based on the Factorization Forest Theorem of Imre Simon~\cite{simonFactorizationForestsFinite1990}, which we choose to present here as a quantifier-elimination result.

\subsection{\mso interpretations}
\label{sec:mso-int}
We begin by recalling the definition of \mso interpretations and stating the main result.
 We assume that the reader is familiar with basic notions of monadic second-order logic \mso, see~\cite{ebbinghausFlumFinite} for an introduction. We only describe the notation that we use.
A \emph{vocabulary} consists of a finite set of relation names, each one with an associated arity in $\set{0,1,\ldots}$. (So far, only relations are allowed, but later in the paper will also start considering partial functions.) A \emph{structure} over such a vocabulary consists of a finite set, called the \emph{universe} of the structure, and an interpretation of the vocabulary, which associates to each relation name in the vocabulary a relation over the universe of matching arity. The syntax and semantics of first-order logic and \mso are defined in the usual way.
We use the name \emph{class of structures} for a class of such structures over some fixed vocabulary; all classes of structures are assumed to be closed under isomorphism. The structures considered in this paper will be used to describe finite strings and trees; furthermore, the trees will have bounded height. Strings are represented as structures according to the following definition; the representation for trees that we use will be slightly non-standard and will be discussed later on.

\begin{definition}\label{def:ordered-representation}
 For a string $w \in \Sigma^*$, its \emph{ordered representation} is the structure whose universe is the string positions, and which is equipped with the following relations 
 \begin{align*}
 \myunderbrace{x \le y}{order on \\ positions} 
 \qquad \qquad \myunderbrace{a(x)}{$x$ has label $a$,\\ for every $a \in \Sigma$}.
 \end{align*}
\end{definition}

An alternative representation would be the \emph{successor representation}, in which the order is replaced by the successor relation. This representation is equally good when defining languages in \mso, but it leads to problems when defining functions, as explained in~\cite[Theorem 4]{msoInterpretations}.

We use \mso to define functions, and not just languages; these functions are called \mso interpretations. The idea is that an \mso interpretation uses $k$-tuples of input elements, for some fixed dimension $k \in \set{0,1,\ldots}$, to describe output elements. Which $k$-tuples participate in the output, and how the relations of the output structure are defined on the output elements -- all of this is described using \mso formulas. The formal definition is given below. It allows extra features of minor importance, namely, we can use several copies of the input structure (called \emph{components}), and each copy can use a different dimension. These extra features are used to make the definition more robust, e.g.~so that functions with constant output size can be defined using dimension $k=0$.

\begin{definition}
 \label{def:mso-interpretation}
 An \mso interpretation is a function 
 \begin{align*}
 f : \Cc \to \Dd
 \end{align*}
 between two classes of structures that is defined as follows. All formulas below are \mso formulas over the vocabulary of the input class $\Cc$. All free variables in the formulas have element type, but the formulas are allowed to quantify over sets.
 \begin{enumerate}
 \item {\bf Components.} There is a finite set $Q$, whose elements are called \emph{components} of the interpretation. Each components has an associated \emph{dimension} in $\set{0,1,\ldots}$. 
 \item {\bf Universe formulas. } For every component, there is an associated \emph{universe formula}, whose number of free variables is equal to the \emph{dimension} of the component. The universe formulas define the universe of the output structure in the following way: if the input structure is $A \in \Cc$ then the universe of the output structure is the disjoint union
 \begin{align*}
 \coprod_{q \in Q}\setbuild{ \bar a \in A^{\text{dimension of $q$}}}{$\bar a$ satisfies the universe \\ formula for $q$}.
 \end{align*}

 \item {\bf Relation interpretations. } For every relation name $R$ in the vocabulary of the output class $\Dd$, say of arity $\ell$, and for every components $q_1,\ldots,q_\ell \in Q$, there is a formula $\varphi$ 
 such that for every input structure $A \in \Cc$,
 \begin{align*}
 A \models \varphi(\bar a_1 \cdots \bar a_\ell) \quad \text{where }\bar a_i \in A^{\text{dimension of }q_i}
 \end{align*}
holds if and only if the relation $R$ in the output structure selects the $\ell$-tuple in which the $i$-th coordinate is $\bar a_i$ from component $q_i$. 
 \end{enumerate}
\end{definition}

The above definition generalizes languages. A language can be seen as a function 
\begin{align*}
L : \Cc \to 2
\end{align*}
where $2$ is some class of structures that contains two structures, representing  ``true'' and ``false''. If the two structures in the output class have at most $n$ elements, then $n$ components of dimension $0$ can be used. 

For general structures, such as graphs, \mso interpretations are not particularly well behaved, in particular, they are not closed under function composition, see the comments in~\cite[Exercise 11.2.4]{ebbinghausFlumFinite} or~\cite[Theorem 4]{msoInterpretations}. However, good behaviour is recovered in the string-to-string case. Define a  \emph{string-to-string \mso interpretation} to be an \mso interpretation of type $\Sigma^* \to \Gamma^*$, where the input and output alphabets are finite and  strings are modelled as structures using  the ordered representation from Definition~\ref{def:ordered-representation}.  Such interpretations define exactly the \emph{polyregular functions}~\cite[Theorem 7]{msoInterpretations}; the latter being a class of string-to-string functions described in~\cite{polyregular-survey}. As far as this paper is concerned, we can view string-to-string \mso interpretations as the definition of the class of polyregular functions. Since polyregular functions are closed under composition~\cite[Theorem 1.4]{polyregular-survey}, the same is true for string-to-string \mso interpretations.

\paragraph*{The growth rate of string-to-string functions.} 
We now state the main result of this section, which describes the growth rate of string-to-string \mso interpretations.
The \emph{dimension} of an \mso interpretation is defined to be the maximal dimension of its components. The dimension gives a simple upper bound on the growth rate: if an \mso interpretation has dimension $k$ and the input structure has $n$ elements, then the output structure will clearly have $\Oo(n^k)$ elements. The main result of this section is that for string-to-string functions, one can choose the \mso interpretation so that this simple bound is tight. 

\begin{theorem}\label{thm:optimization} For every polyregular function one can compute some $k \in \set{0,1,\ldots}$ such that the function is an \mso interpretation of dimension $k$ and has growth rate $\Theta(n^k)$. 
 \end{theorem}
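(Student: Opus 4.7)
The plan is to start with any representation of $f$ as an \mso interpretation of some dimension $d$, and to iteratively reduce $d$ as long as the growth rate of $f$ is strictly less than $\Theta(n^d)$. The theorem then follows by taking $k$ to be the dimension one ends up with, provided the reduction can be done effectively. I would separate the argument into two lemmas: (i) given an interpretation of dimension $d$, compute the exact growth rate, which will be $\Theta(n^j)$ for some $j \le d$; (ii) if $j < d$, replace the interpretation by one of dimension $d-1$ computing the same function.

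For step (i), I would analyse the universe formula $\varphi(x_1,\dots,x_d)$ using Simon's Factorization Forest Theorem in the quantifier-elimination form advertised in the section intro. A satisfying tuple $x_1<\dots<x_d$ cuts the input string into $d+1$ intervals, and the truth of $\varphi$ is determined by the \mso-types of bounded quantifier depth of these intervals together with the labels at the $x_i$'s. I would partition satisfying tuples into finitely many \enquote{shape classes} indexed by such type profiles, and estimate, using Simon's theorem, the number of tuples of each shape on an input of length $n$. Each class contributes $\Theta(n^{j})$ where $j$ counts coordinates that are free to range inside long intervals of idempotent type; pinned coordinates (forced into a region of constant size once the others are fixed) contribute nothing. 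The growth rate is the maximum of these contributions, and since the type algebra is finite and effectively computable, so is the maximum.

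For step (ii), if the growth rate is $O(n^{d-1})$ then every shape class has $j<d$, so in each class at least one coordinate is pinned relative to the others. I would then build a new interpretation of dimension $d-1$: its components are indexed by pairs (original shape class, choice of pinned coordinate), each component uses the $d-1$ un-pinned coordinates as free variables, and the relation formulas are obtained by relativising the original formulas to the \mso-definable pinning relation. Iterating gives an interpretation of dimension exactly equal to the growth-rate exponent $k$, and all of this is computable because MSO-types, shape classes, and their contributions can be enumerated.

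The hard part is making the informal statement \enquote{a shape class contributing $O(n^{d-1})$ admits a pinned coordinate} into an actual \mso-definable reduction that is uniform in the class. Simon's forests are not part of the input structure, so the pinning relation must be expressed by \mso formulas that locate the relevant landmarks (boundaries of idempotent blocks, split points of the forest, etc.) purely from the ordered representation. Verifying that the translated formulas compose correctly so that the new interpretation computes the same function $f$ — and not some close relative with the wrong labelling — is where the main bookkeeping lives; I expect this to occupy most of the section and to drive the precise statement of the quantifier-elimination form of Simon's theorem that is used.
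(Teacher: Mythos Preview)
Your outline tracks the paper's argument in spirit, but the paper organises things differently in one essential way, and that difference is exactly what closes the gap you flag in your last paragraph.

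Rather than analysing universe formulas on strings and then trying to express the pinning relation in \mso over the ordered representation, the paper first \emph{precomputes} a Simon factorisation forest by a linear \mso interpretation $h : \Sigma^* \to \Tt$ into a class of bounded-height trees (Theorem~\ref{thm:quantifier-elimination}), so that $f = g \circ h$ with $g$ \emph{quantifier-free} over the Simon tree vocabulary (parent function, sibling-successor, leftmost/rightmost-sibling predicates). Since $h$ has dimension one, a dimension-$k$ interpretation for $g$ yields one for $f$; and since yield is surjective and length-preserving, $f$ and $g$ have the same growth rate. So it suffices to optimise $g$.

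On trees this analysis becomes clean and non-iterative. A quantifier-free type of a $d$-tuple of nodes is determined by its \emph{skeleton} (the nodes plus their ancestors, with the inherited sibling relations), and the paper builds a dependency graph on the skeleton whose minimal strongly connected components pick out a basis $X \subseteq \{x_1,\dots,x_d\}$ (the Basis Lemma). The basis spans the remaining variables via the quantifier-free functions already present in the tree vocabulary, so pinning is automatic rather than something to be reconstructed post hoc in \mso; and a pumping argument on the minimal \scc's shows that the growth rate of the type is exactly $\Theta(n^{|X|})$. One then replaces the $d$ free variables by the basis in a single step, existentially quantifying the rest --- no iteration is needed. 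The answer to your worry that ``Simon's forests are not part of the input structure'' is thus: the paper \emph{makes} them part of the structure, up front, at linear cost; your reduce-by-one scheme would also work once that preprocessing is in place, but is then subsumed by the direct basis computation.
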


The theorem is proved by taking an \mso interpretation, and eliminating redundant variables in the universe formulas, so that the remaining variables are independent enough to make the dimension optimal. This is done in two steps. The first step, in Section~\ref{sec:quantifier-elimination}, is a quantifier elimination result for polyregular functions. The second step, in Section~\ref{sec:proof-of-optimization-theorem}, uses the quantifier elimination result to prove Theorem~\ref{thm:optimization}.

\subsection{Quantifier elimination}
\label{sec:quantifier-elimination}
In this section we show that every polyregular function can be decomposed into two stages: the first stage is a linear preprocessing of the input, and the second stage is a quantifier-free interpretation, i.e.~an \mso interpretation where all formulas are quantifier-free. The intermediate structure produced in the first stage is not a string, but a tree of bounded height.

\paragraph*{Function symbols.} When eliminating quantifiers, we use structures that have not only relations, but also function symbols, which are interpreted as partial functions. Formally speaking, the vocabularies can also have function symbols, also with associated arities in $\set{0,1,\ldots}$. In a structure with universe $A$, a function symbol of arity $\ell$ is interpreted as a partial function of type $A^\ell \to A$. When defining the semantics of first-order logic in the presence of partial functions, we assume that an atomic formula holds if all of the partial functions used in it have defined values, and the corresponding relation is satisfied. For example, if a structure has a constant $c$ (a constant is a function of arity zero) which is undefined, then any atomic formula involving this constant, such as $R(c,c)$ or $c=c$, will be false.

\paragraph*{Trees.} The partial function symbols will be used to describe operations on tree nodes, such as the parent operation. We use trees which are node labeled and sibling ordered. The following picture explains our tree terminology: 
\mypic{20}
Although the sibling successor is a partial function, we view it as a relation, since otherwise quantifier-free formulas could iterate the sibling successor to look arbitrarily far in the tree.
We use the name \emph{sibling order} for the reflexive transitive closure of the  sibling successor relation; this is a union of total orders, one for each set of siblings. 

We intend to use trees as a representation of the trees that appear in the Factorization Forest Theorem of Imre Simon~\cite[Theorem 1]{bojanczykFactorizationForests2009}, in particular trees will have bounded height. (The \emph{height} of a tree is defined to be the maximal number of edges on a root-to-leaf path.) That is why the structure in the following definition is named after Simon. The structure is chosen so that all relevant information from the point of view of this theorem can be accessed in a quantifier-free way. 

\begin{definition}[Simon representation of a tree]\label{def:simon-representation}
 For a tree with nodes labeled an alphabet $\Sigma$, its \emph{Simon structure} is the structure in which the universe is the tree nodes, and which is equipped with the following functions and relations:
\begin{enumerate}
    \item a constant for the root;
    \item a unary function that maps each node to its parent, and which is undefined for the root node;
    \item a unary relation selecting nodes with label $a \in \Sigma$;
    \item a unary relation selecting leftmost siblings;
    \item a unary relation selecting  rightmost siblings;
    \item a binary relation for the sibling order;
    \item a binary relation for the sibling successor.
\end{enumerate}
\end{definition}

For example, the quantifier-free formula 
\begin{align*}
 \text{sibling-successor}( \text{parent}(x),\text{parent}(y))
\end{align*}
says that the parents of nodes $x$ and $y$ are sibling successors. In particular, $x$ and $y$ have the same grandparent.

\paragraph*{Tree grammars.} As mentioned before, we intend to work with bounded height trees. To make the height bounded, we will generate trees using certain grammars, which are called \emph{tree grammars} in this paper, and which syntactically ensure bounded height. A tree grammar consists of a finite set of labels $\Sigma$ with a distinguished \emph{root label}, and a set of rules, with each rule having one of four kinds: 
\begin{align*}
\myunderbrace{a \to }{nullary rule}
\qquad
\myunderbrace{a \to b}{unary rule}
\qquad
\myunderbrace{a \to bc}{binary rule}
\qquad
\myunderbrace{a \to b^*}{star rule}
\end{align*}
The rules are required to be \emph{acyclic}, which means that there is a pre-order on the letters such that for every rule, the letter before the arrow is strictly bigger than all letters after the arrow.
The semantics of a grammar is a set of ordered trees with nodes labeled by $\Sigma$, which is defined in the natural way and explained in the following picture:
\mypic{16}
Acyclicity ensures that the height of trees generated by the grammar is bounded by the size of the alphabet. The set of trees generated by a tree grammar is viewed as a class of structures using the Simon representation. 

\paragraph*{Quantifier elimination for polyregular functions.} We now present the quantifier elimination result for polyregular functions. This theorem can be seen as an abstraction of the Factorization Forest Theorem, which encapsulates the properties of factorization trees that are needed in our context. We believe that this perspective, which views the Factorization Forest Theorem as a quantifier elimination result, might be useful in future work\footnote{This perspective is not entirely new. Already in~\cite[Lemma 1]{colcombetCombinatorialTheoremTrees2007}, Colcombet views factorization trees as a data structure that allows one to reduce \mso to first-order logic. Kazana and Segoufin take this one step further in~\cite[Theorem 3.2]{segoufinKazan2013}, by observing that the reduction yields special formulas of first-order logic, namely those with quantifier prefix $\exists^* \forall^*$. Here, we take these observations one step further, by putting enough structure in the tree so that the formulas become quantifier-free.}. 

The idea behind the quantifier elimination result, stated in Theorem~\ref{thm:quantifier-elimination} below, is that each input string can be equipped with a tree structure of bounded height, such that a given polyregular function can be computed in a quantifier-free way based on this structure. 
 In the theorem, the \emph{yield} of a tree is defined to be the string consisting of the labels of leaves in the tree, read from left to right. 

\begin{theorem}\label{thm:quantifier-elimination}
For every polyregular string-to-string function 
\begin{align*}
f : \Sigma^* \to \Gamma^*
\end{align*}
there is a tree grammar $\Tt$, an \mso interpretation $h$ that is linear (i.e.~dimension at most one) and a quantifier-free interpretation $g$ such that the following diagram commutes: 
\[
 \begin{tikzcd}
 [column sep=1.5cm]
 \Sigma^* 
 \ar[r,"\text{linear }h"]
 \ar[dr,equals, bend right=30]
 & \Tt 
 \ar[dr,"\text{quantifier-free }g"]
 \ar[d, "\text{yield}"']
 \\
 & 
 \Sigma^*
 \ar[r,"f"']
 &
 \Gamma^*
 \end{tikzcd}
 \] 
\end{theorem}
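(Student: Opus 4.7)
The plan is to prove Theorem~\ref{thm:quantifier-elimination} by combining Simon's Factorization Forest Theorem with a careful choice of monoid that captures the \mso content of $f$. Write $f$ as an \mso interpretation $\tilde f$ and let $d$ strictly exceed the quantifier rank of every formula appearing in $\tilde f$. The set of $(\mso, d)$-types of strings over $\Sigma$ forms a finite monoid $M$ under concatenation, and the natural assignment $w \mapsto [w]_d$ is a morphism $\mu : \Sigma^* \to M$. Simon's theorem applied to $\mu$ yields, for every $w$, a factorization tree of bounded height $H = H(|M|)$, where every star node carries an idempotent $e \in M$ whose children all carry label $e$, and every binary node carries the product of its children's labels.

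The first step is to realise these trees as the image of a linear \mso interpretation $h$. I take the label alphabet of $\Tt$ to be the disjoint union of $\Sigma$ (for leaves) with triples $(m, k, \kappa) \in M \times \{0, \dots, H\} \times \{\text{unary}, \text{binary}, \text{star}\}$ (for internal nodes), where $k$ is the height of the subtree rooted at the node. Acyclicity is immediate from the strictly decreasing height parameter, and the four rule shapes are exactly what is needed. To express $h$ as a linear interpretation, I represent every internal node by the leftmost leaf of its yield; since heights are at most $H$, each leaf represents at most $H+1$ nodes, so $h$ uses $H+1$ components of dimension one. All Simon-structure relations (parent, sibling successor, sibling order, labels, leftmost/rightmost) are \mso-definable over $w$, because the standard inductive construction of a Simon factorisation can itself be performed by an \mso transduction.

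The second step is to define $g$ quantifier-freely on the Simon representation. The central invariant is that for any two leaves $x \le y$, the $(\mso, d)$-type of the factor of $w$ delimited by $x$ and $y$ is a quantifier-free function of the tree. To see this, locate the least common ancestor $z$ of $x$ and $y$ by iterating the parent function at most $H$ times on each of $x, y$ and matching; this is a bounded case analysis. The factor type then decomposes as the product of (i) types contributed by the ancestor chain of $x$ up to $z$, (ii) the slice of $z$-children strictly between $x$'s branch and $y$'s branch, and (iii) the symmetric contribution for $y$. Component (ii) is where Simon's theorem pays dividends: if $z$ is a star node, the slice type equals the idempotent label independent of how many siblings fall in the slice, while if $z$ is a binary node the slice is at most one sibling. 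Output positions of $\tilde f$ are tuples of input positions, hence tuples of leaves; universe and relation formulas of $\tilde f$ then translate, by \mso composition, to Boolean combinations of quantifier-free factor-type computations over the Simon structure.

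The main obstacle is the translation in the second step, namely, verifying that \mso quantifiers over yield positions genuinely collapse to quantifier-free operations. Two features of Simon's theorem drive this: bounded height lets "go to the ancestor at level $k$" be written as a fixed finite iteration of the parent function, and idempotency at star nodes lets "product over an arbitrary contiguous range of siblings" reduce to a single monoid element regardless of how many siblings are involved. Without both features, the translation would require counting or unbounded ancestor lookups, which would reintroduce quantifiers. Assembling the argument requires a disciplined induction on formula structure, but conceptually it is exactly the insight already present in Colcombet's and Kazana--Segoufin's use of factorization forests, pushed one step further to the quantifier-free fragment thanks to the richer Simon representation.
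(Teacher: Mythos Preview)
Your proposal is correct and follows essentially the same route as the paper: apply Simon's Factorization Forest Theorem to the monoid of \mso types (equivalently, the monoid from the compositionality lemma), produce the factorization tree by a linear \mso interpretation, and then observe that the interval types needed to evaluate the formulas of $\tilde f$ are quantifier-free functions of the Simon structure thanks to bounded height and idempotency at star nodes. The paper is terser, delegating both the linear computability of the factorization tree and the quantifier-free recovery of interval values to citations, whereas you spell out the least-common-ancestor computation and the three-part decomposition explicitly; but the underlying argument is the same.
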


The theorem is proved in the appendix, using the Factorization Forest Theorem. 
We only explain here how the interpretations above  handle partial functions; this was not explained in Definition~\ref{def:mso-interpretation} which used only vocabularies without function names.  Recall that strings are represented using the ordered representation from Definition~\ref{def:ordered-representation} and trees are represented using  the Simon representation from Definition~\ref{def:simon-representation}. For the linear interpretation $h$, which inputs strings and outputs trees, the functions in the output tree are viewed as relations that represent their graphs, i.e.~a function with $\ell$ arguments is viewed as a relation with $\ell+1$ arguments\footnote{This view would be overly simplistic for interpretations that are both quantifier-free and output structures with functions. However, in our setting, the interpretations are either quantifier-free or output structures with functions, but not both. }. In this particular situation, $\ell \le 1$ because the functions in Definition~\ref{def:simon-representation} have at most one argument. For the quantifier-free interpretation $g$, the only partial functions are in the input class, so there is no need to adapt Definition~\ref{def:mso-interpretation}, other than allowing the formulas to use the functions from the input structure.

\subsection{Proof of Theorem~\ref{thm:optimization}}
\label{sec:proof-of-optimization-theorem}
In this section, we use the quantifier-elimination result from Theorem~\ref{thm:quantifier-elimination} to complete the proof of Theorem~\ref{thm:optimization} about the growth rate of polyregular functions. 

Take a polyregular function $f$. We want to show that there is some $k \in \set{0,1,\ldots}$ such that this function has growth rate $\Theta(n^k)$ and can be defined by an \mso interpretation of dimension $k$. 
Apply Theorem~\ref{thm:quantifier-elimination} to the function $f$, yielding
\[
 \begin{tikzcd}
 [column sep=1.5cm]
 \Sigma^* 
 \ar[r,"\text{linear }h"]
 \ar[dr,equals, bend right=30]
 & \Tt 
 \ar[dr,"\text{quantifier-free }g"]
 \ar[d, "\text{yield}"']
 \\
 & 
 \Sigma^*
 \ar[r,"f"']
 &
 \Gamma^*
 \end{tikzcd}
 \] 

We will show that the growth rate and dimension coincide for quantifier-free interpretations, as explained in the following lemma. The following lemma ashows that for quantifier-free interpretations, the growth rate can be computed, and corresponds to the optimal dimension in some first-order interpretation.

\begin{lemma}\label{lem:optimization-for-quantifier-free}
 For every quantifier-free interpretation $g : \Tt \to \Gamma^*$
 one can compute some $k \in \set{0,1,\ldots}$ such that $g$ is a first-order interpretation of dimension $k$ and has growth rate $\Theta(n^k)$.
 \end{lemma}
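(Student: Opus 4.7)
Let $H$ be the height bound of trees generated by the grammar $\Tt$. A key observation for the Simon representation is that every term built from the parent function has the form $\mathrm{parent}^e(x)$ with $e \le H$; larger iterations are automatically undefined, so atoms involving them are false. Consequently, each atomic subformula of the universe formula $\varphi_q(x_1,\ldots,x_{d_q})$ of a component $q$ refers only to such bounded-depth parent-terms together with labels, sibling-order, sibling-successor, equality, and the root constant.

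For each component $q$ I put $\varphi_q$ in disjunctive normal form $\bigvee_j \psi_{q,j}$ and compute an \emph{effective dimension} $k_{q,j}$ for each disjunct as follows. Call a variable $x_{i'}$ \emph{pinned} by a set $S$ of variables if $\psi_{q,j}$ contains a literal forcing $x_{i'} = \mathrm{parent}^e(x_i)$ for some $x_i \in S$ and $e \ge 0$, or forcing $x_{i'} = \mathrm{root}$. The effective dimension $k_{q,j}$ is the minimum size of a set $S$ such that every variable outside $S$ becomes pinned by $S$ under iteration of this relation. Set $k := \max_{q,j} k_{q,j}$; this is clearly computable. For the growth-rate upper bound, on any input tree of size $n$ in $\Tt$ the number of tuples satisfying $\psi_{q,j}$ is at most $C \cdot n^{k_{q,j}}$ for a constant $C$ depending only on $H$ and $\psi_{q,j}$, since once the $k_{q,j}$ base variables are chosen, all remaining variables are uniquely determined; hence $g$ has growth rate $\Oo(n^k)$.

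For the matching lower bound I pick a disjunct $\psi_{q^*,j^*}$ with $k_{q^*,j^*}=k$ and a satisfying witness tree as seed. Combined with the acyclicity of $\Tt$, iteratively pumping star-rules that lie above each base variable yields a family of trees of size $n$ in $\Tt$ on which the base variables independently range over $\Omega(n)$ positions, producing $\Omega(n^k)$ satisfying tuples and thus growth rate $\Theta(n^k)$. To present $g$ as a first-order interpretation of dimension $k$, I split each component $q$ into subcomponents indexed by disjuncts, with the $k_{q,j}$-dimensional subcomponent for $\psi_{q,j}$ parameterised by its base set; its universe formula is the restriction of $\psi_{q,j}$ to the base variables (the rest being first-order definable from them via parent-terms), and each relation interpretation is rewritten by first-order-quantifying the uniquely pinned variables and then applying the original quantifier-free atom. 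The main obstacle is the lower bound: the purely syntactic effective dimension must actually be realized by trees of the given grammar, so one has to verify that base variables sit under pumpable star-rules, and refine the definition of $k_{q,j}$ when they do not — this requires a careful combinatorial interplay between the shape of the formula and the acyclic/star structure of $\Tt$.
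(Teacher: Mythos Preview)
Your proposal has the right overall shape --- decompose universe formulas, find a minimal determining set of variables, reduce dimension accordingly, and argue a matching lower bound --- but the notion of ``pinned'' is too weak, and this is a genuine gap rather than a technicality.

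You only allow $x_{i'}$ to be pinned by $S$ via $x_{i'} = \mathrm{parent}^e(x_i)$ or $x_{i'} = \mathrm{root}$. This misses the functional dependencies coming from the sibling-successor, leftmost-sibling, and rightmost-sibling relations of the Simon representation. Take the grammar with root label $r$ and a single star rule $r \to a^*$, and the conjunction $\psi(x_1,x_2)$ asserting that $x_1,x_2$ are $a$-leaves with $\text{sibling-successor}(x_1,x_2)$. Neither variable is a parent-iterate of the other and neither equals the root, so your effective dimension is $2$; but the actual growth is linear, since fixing $x_1$ determines $x_2$. Hence the $k$ you compute is too large and the claimed $\Theta(n^k)$ lower bound is unattainable. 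Similar examples arise from leftmost/rightmost-sibling literals, and more subtle ones from chains mixing parent with these relations (e.g.\ $x_2$ is the leftmost child of the sibling-successor of $\mathrm{parent}(x_1)$). Your closing remark that $k_{q,j}$ may need to be ``refined'' is really an admission that the central combinatorial content is missing.

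The paper's route is to refine each universe formula all the way to quantifier-free \emph{types} (not arbitrary DNF disjuncts --- a complete type fixes the relationships among all ancestors of the variables, which an ordinary conjunct need not), and then, for each type, build a dependency graph on the nodes of its skeleton with four kinds of directed edges: parent, leftmost-child, rightmost-child, and sibling-successor. The basis is one variable from each minimal strongly connected component of this graph; the spanning and independence conditions then follow from an SCC argument and a pumping construction on the minimal SCCs. To repair your proof you would need this full catalogue of dependencies and the SCC analysis; without them, $k$ overshoots and the $\Theta$-bound fails.
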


From the lemma, we immediately get the same result for \mso interpretations. Apply the lemma to the quantifier-free interpretation $g$, yielding some $k$. The yield operation is length-preserving if we define the length of a tree to be the number of leaves. Since the yield operation from $\Tt$ is also surjective, it follows that the growth rates are the same for $f$ and $g$, namely $\Theta(n^k)$. Also, $f$ is an \mso interpretation of dimension $k$, as a composition of a linear  \mso interpretation with a first-order interpretation of dimension $k$. Thus, we have proved that $f$ has growth rate $\Theta(n^k)$ and is an \mso interpretation of dimension $k$, completing the proof of Theorem~\ref{thm:optimization}. We are left with Lemma~\ref{lem:optimization-for-quantifier-free}.

The rest of this section is devoted to proving Lemma~\ref{lem:optimization-for-quantifier-free}. This will be done via a syntactic analysis of quantifier-free types. Here, a \emph{quantifier-free type} is defined to be a quantifier-free formula $\varphi(x_1,\ldots,x_\ell)$ such that every quantifier-free formula with the same free variables is either implied by $\varphi$ or inconsistent with it. In this paper, we care about quantifier-free types that arise by taking some tree in a tree grammar, and describing the quantifier-free formulas that are satisfied by some tuple of $\ell$ distinguished nodes. Such a quantifier-free type will describe the distinguished nodes and their ancestors, using the relations available in the Simon representation. 

\begin{myexample}\label{ex:dependencies-in-skeleton}
    The following picture shows a quantifier-free type which arises by taking a tree with six distinguished nodes, using the Simon representation.  In the picture, we use ellpises $\ldots$ to represent missing nodes. The presence or absence of missing nodes can be deduced from the relations for leftmost siblings, rightmost siblings, and successor siblings, which are available in the Simon representation. 
\mypic{34}
There are certain functional dependencies between the distinguished nodes in the above type. Here are some:
\begin{center}
    \begin{tabular}{ll}
        dependency & reason \\ \hline
        $x_1$ determines $x_5$ & successor sibling\\
        $x_1$ determines $x_6$ & rightmost sibling\\
        $x_1$ determines $x_2$ & successor sibling of grandparent\\
        $x_3$ determines $x_4$ & leftmost child
    \end{tabular}
\end{center}

The above list is non-exhaustive, for example the dependency between $x_3$ and $x_4$ is mutual. Thanks to the dependencies described above, the distinguished nodes $x_1$ and $x_3$ determine all the other distinguished nodes. Since these two nodes do not determine each other, they are what we call a \emph{basis} of the distinguished nodes. The basis is not unique, e.g.~we can replace $x_3$ with $x_4$. As we will see later, the size of the basis is unique. Since the basis has size two, and the nodes in it can be chosen indepdently, the growth rate of the quantifier-free type in this example is quadratic, i.e.~the number of realizations in a given tree is at most quadratic, and there are trees in which it is at least quadratic.
\end{myexample}

The analysis from the above example is formalized in the following lemma. 

 \begin{lemma}[Basis lemma]\label{lem:seed-qf-formulas} Let $\Tt$ be a tree grammar, and let  $\varphi(x_1,\ldots,x_\ell)$ be a quantifier-free type over the vocabulary of $\Tt$, using Simon representation. There is a subset 
 \begin{align*}
 X \subseteq \set{x_1,\ldots,x_\ell}
 \end{align*}
 of the free variables which is a basis in the following sense:
 \begin{enumerate}
 \item \label{basis:spans} The variables in $X$ span all variables, in the sense that for every tree $t \in \Tt$, if two tuples selected by $\varphi$ agree on the variables from $X$, then they are equal.
 \item \label{basis:growth} The variables in $X$ are independent, in the sense that the following function is in $\Theta(n^k)$, where $k = |X|$:
\begin{align*}
\qquad \qquad 
\myunderbrace{n \in \set{1,2,\ldots} 
\ \mapsto \ 
\begin{tabular}{c}
 \text{maximal number of tuples } \\
 \text{ that can be selected by $\varphi$ in  }\\
 \text{a  tree from $\Tt$ of size at most $n$}
\end{tabular}}
{this function is called the \emph{growth rate} of $\varphi$}
\end{align*}
 \end{enumerate} 
\end{lemma}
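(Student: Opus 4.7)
The plan is to build $X$ as a minimal spanning subset and then prove the two halves of the growth-rate bound separately. Call a subset $Y \subseteq \{x_1,\ldots,x_\ell\}$ \emph{spanning} if, for every tree $t \in \Tt$, any two $\ell$-tuples satisfying $\varphi$ in $t$ that agree on their $Y$-coordinates must be equal. Starting from the whole set $\{x_1,\ldots,x_\ell\}$ (which is trivially spanning), I would iteratively discard variables whose removal preserves spanning, arriving at a minimal spanning set $X$; condition~\ref{basis:spans} then holds by construction.

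The upper bound $\Oo(n^k)$ on the growth rate is an immediate consequence of spanning: each realization of $\varphi$ is determined by its $k = |X|$ basis coordinates, and these can be chosen in at most $n^k$ ways in a tree of size $n$. The matching lower bound is the substance of the lemma and the main obstacle. The intuition, following Example~\ref{ex:dependencies-in-skeleton}, is that minimality of $X$ means each basis variable enjoys some positional freedom in the realizations of $\varphi$, and this freedom can be amplified by the star rules of the grammar into $\Omega(n)$ choices per basis variable.

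For the lower bound, I would fix a tree $t_0 \in \Tt$ realizing $\varphi$ with a witness tuple $\bar a$ (this exists because $\varphi$ is a consistent type), and associate to each $x \in X$ a \emph{pumping site}: an ancestor of $a_x$ in $t_0$ whose children are generated by a star rule $a \to b^*$, at which additional siblings can be inserted (or $a_x$ shifted along the sibling-successor chain) to produce new realizations of $\varphi$ that vary $a_x$. The key step in justifying the existence of such a site is that any change of the $x$-coordinate while keeping all other basis coordinates fixed must come from shifting inside a star-generated sequence: the Simon vocabulary makes parent chains, labels, and successor-sibling atoms rigid, so the only atomic relation that can flex under a change of $a_x$ is sibling order inside a star-generated sequence of unbounded length, which is exactly what a pumping site provides.

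The principal technical difficulty is combining the pumping sites across all $k$ basis variables to obtain $\Omega(n^k)$ realizations inside a single tree of size $\Oo(n)$. I would handle this by grouping basis variables according to the star-generated sibling sequence they live in: variables in distinct sequences can be pumped independently and contribute $\Omega(n)$ choices each; and if $r$ basis variables share one sequence, then minimality of $X$ rules out sibling-successor links among them, leaving only sibling-order constraints, so a sequence of length $m$ admits $\binom{m}{r} = \Theta(m^r)$ compatible placements. Distributing a linear pumping budget across the independent sites then yields at least $\Omega(n^k)$ realizations inside trees of size $\Oo(n)$, which establishes condition~\ref{basis:growth} and completes the proof.
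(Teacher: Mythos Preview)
Your plan and the paper's proof share the same skeleton but differ in the organizing device. You take $X$ to be a minimal spanning set obtained by greedy deletion; the paper instead builds a \emph{dependency graph} on the skeleton of the type (the distinguished nodes together with all their ancestors), with edges for parent, leftmost child, rightmost child, and successor sibling, and takes $X$ to be one variable from each \emph{minimal strongly connected component}. The two choices coincide, because the dependency edges are exactly the functional determinations available in the Simon vocabulary, so a set spans iff every minimal \textsc{scc} meets it. Your upper bound argument is the same as the paper's.

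Where the dependency graph earns its keep is the lower bound. The paper's construction is entirely syntactic: each minimal \textsc{scc} is a subtree-shaped fragment of the skeleton whose root is neither a leftmost, rightmost, nor successor sibling in the skeleton (otherwise the parent would reach back into it, contradicting minimality), so it hangs under a star node and can be replicated $n$ times; distinct minimal \textsc{scc}s are never in an ancestor relationship, so the replications are independent. This gives a single tree $t_n$ of size $\Oo(n)$ with $\ge n^k$ realizations.

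Your sketch reaches for the same construction but leaves two points underspecified. First, your ``pumping site'' is declared to be some star-generated ancestor of $a_x$, but you do not say which one; the right choice is the parent of the topmost node of the minimal \textsc{scc} containing $x$, and showing that this parent is a star node with no successor or extremal tie to the rest of the skeleton is precisely the content of ``minimal \textsc{scc}''. Second, your combination step groups basis variables by ``the star-generated sibling sequence they live in'', but what must be replicated is the whole \textsc{scc} subtree, not the variable alone, and the independence you need (no successor or leftmost/rightmost link between the replicated pieces) is a statement about the \emph{roots} of those subtrees, not about the basis variables themselves. Once this is said your $\binom{m}{r}$ count goes through; without it, the step from ``each basis variable can move in some tree'' to ``all basis variables can move independently in one tree of linear size'' is not justified.
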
 

The Basis Lemma is shown in the appendix, using a syntactic analysis of dependencies between variables in a quantifier-free type. We now show how it implies Lemma~\ref{lem:optimization-for-quantifier-free}, about growth rate and dimension coinciding for quantifier-free interpretations, and thus also Theorem~\ref{thm:optimization}.

\begin{proof}[Proof of Lemma~\ref{lem:optimization-for-quantifier-free} using the Basis Lemma] Let  \begin{align*}
 g : \Tt \to \Gamma^*
 \end{align*}
 be a quantifier-free interpretation
 as in the assumption of Lemma~\ref{lem:optimization-for-quantifier-free}. For each component, consider its universe formula. Since the trees in the tree grammar $\Tt$ have bounded height, there are finitely many possible quantifier-free types for a given number of variables, and each quantifier-free formula is equivalent to a disjunction of some quantifier-free types.  Therefore, by possibly increasing the number of components,  we can assume without loss of generality that for every component, the corresponding universe formula is a quantifier-free type. For each component, apply the Basis Lemma for the corresponding quantifier-free type, yielding some basis. We will use the first item of the Basis Lemma to reduce the dimension of each component to the size of the basis, and the second item to give a matching lower bound for the growth rate. 
 
 Consider one of the components, and a basis for the universe formula, which is a subset of its free variables. By the first item of the Basis Lemma, all other variables in the universe formula are spanned by the basis variables. Therefore, we can reduce the dimension of this component to the size of the basis, as follows. The new universe formula uses only the basis as its free variables, and it holds if the new free variables can be extended to some tuple that satisfies the original universe formula. (The extension, if it exists, is unique by item~\ref{basis:spans} of the Basis Lemma.) Observe that the new universe formula is no longer quantifier-free, because it uses existential quantifiers; nevertheless, it is a first-order formula (even an existential one) and not an \mso formula, since no sets need to be quantified. The remaining formulas in the interpretation, which describe the relations of the output string, are adjusted accordingly, by applying the original quantifier-free formulas to the unique extensions.
 
 We now use the second item in the Basis Lemma to argue that the new interpretation has optimal dimension. 
 Let $k$ be the maximal size of the bases used in the construction above. By the second item of the Basis Lemma, we know that the growth rate of one of the universe formulas is $\Theta(n^k)$ (this is true for both the original and new interpretations), and therefore the growth rate of the function $g$ is $\Theta(n^k)$, which matches the dimension of the new interpretation.
\end{proof}

\section{On the cost of stack discipline}
\label{sec:lower-bound}

In Theorem~\ref{thm:optimization} we showed that the growth rate and dimension hierarchies coincide for polyregular functions.  In this section, we show that the correspondence fails for the hierarchy which counts the number of pebbles in a  pebble transducer, and it fails badly: there is no level of the pebble hierarchy that covers all quadratic polyregular functions.  This result and its proof correct an error in~\cite[Theorem 10]{lhote2020}, where it was claimed that the pebble hierarchy coincides with the growth rate hierarchy.

\paragraph*{Pebble transducers.} The usual definition of pebble transducers, see~\cite[Section 1]{engelfriet2002two} is operational, and it describes an extension of a two-way automaton with pebbles used to mark positions in the input. In this paper, we use a slightly non-standard approach to pebble transducers -- we define them as a special case of string-to-string \mso interpretations. This is done by using automata terminology (such as state and configuration) for an \mso interpretation, and then imposing a restriction called stack discipline.

We begin by describing the automata terminology for  string-to-string \mso interpretations. Instead of component, we use the name \emph{state}. Define a \emph{configuration} of an \mso interpretation to be a tuple that consists of an input string, a state, and a list of positions  that satisfies the universe formula for the state. The list of positions is called the \emph{pebble stack} of the configuration; the length of this list, which is the dimension of the corresponding state, is called the \emph{stack height}. The \emph{head} of the configuration is defined to be the last position in the pebble stack. Here is a picture of a configuration: 
\mypic{21}
We say that two configurations are \emph{consecutive} if they have the same input string, and they are consecutive elements according to the linear order on configurations given by the transducer. 

\begin{definition}\label{def:pebble-transducer}
  A \emph{pebble transducer} is a string-to-string \mso interpretation that satisfies the following \emph{stack discipline condition}. For every two consecutive configurations, either 
  \begin{enumerate}
    \item {\bf Push/pop.} One of the two pebble stacks is a prefix of the other; or 
    \item {\bf Move.} Both pebble stacks have equal lengths, and are equal except for the head.
  \end{enumerate}
\end{definition}


 When speaking of pebble transducers, the dimension is called the \emph{number of pebbles}. A $k$-pebble transducer is one with $k$ pebbles, i.e.~the maximal stack size is $k$. 
 
 We now show that  our definition of pebble transducers is equivalent to the one usually found in the literature. There is a small proviso: in order to consistently compare our model with the one in the literature, we need to count pebbles in the same way for both models in the same way, we  count the head as a pebble, see  Footnote~\ref{footnote:headcount}.

 \begin{lemma} \label{lem:equivalent-pebbles} For every number of pebbles $k \in \set{1,2,\ldots}$, the model from Definition~\ref{def:pebble-transducer} computes the same string-to-string functions as the model  defined in~\cite[Section 1]{engelfriet2002two}.  
 \end{lemma}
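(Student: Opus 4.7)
The proof breaks into two inclusions. For the easier direction, I would take an operational $k$-pebble transducer $M$ in the sense of~\cite{engelfriet2002two} and recast it as an \mso interpretation in the sense of Definition~\ref{def:pebble-transducer}. Its components would be indexed by pairs $(q,j)$, with $q$ a control state of $M$ and $0 \le j \le k$ a stack height; component $(q,j)$ has dimension $j$, and its universe formula asserts that the given $j$-tuple of positions is a reachable pebble-stack of $M$ in control state $q$. Reachability of configurations is \mso-definable because \mso can express path existence by set quantification and the one-step transition relation of $M$ is quantifier-free. The output alphabet and output order are read off from the \mso-definable total order on configurations induced by the run of $M$. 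Stack discipline holds by construction, since every atomic step of $M$ is a push, a pop, or a one-cell head movement.

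For the harder direction, given an \mso interpretation $f$ of dimension $k$ satisfying the stack discipline, I would build an operational $k$-pebble transducer with matching input/output behaviour. The control state of the operational machine records the current declarative component together with a bounded amount of scheduling information; its pebble stack mirrors the declarative stack one-for-one. To commit the next transition the transducer must decide, given its current stack $(p_1,\ldots,p_j)$, which of the three moves to perform and, in the case of move or push, what target position to use. Each such decision is specified by an \mso formula with at most $k$ free position-parameters. The key technical tool is the classical equivalence between \mso on strings with marked positions and two-way finite automata scanning the input with those positions as markers (the mechanism underlying the Engelfriet--Hoogeboom theorem): the operational transducer evaluates each formula by a two-way subroutine that uses the existing pebbles as read-only markers, without installing any new ones.

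The principal obstacle is the mismatch in granularity between the declarative stack discipline, where one move can jump the head to an arbitrary \mso-definable position, and the operational model, where the head shifts by one cell per step. My plan is to first normalise $f$ into an equivalent \mso interpretation in which every \emph{move} step is between adjacent positions and every push deposits the new pebble at the current head, by inserting intermediate configurations labelled with fresh padding symbols that are subsequently deleted by post-composition with a length-erasing homomorphism (which is itself polyregular, so the composition stays in the class). The interpolation preserves stack discipline and does not increase the pebble bound, because the inserted configurations use the same stack shape as their endpoints. Once $f$ is in this local form, the simulation in the preceding paragraph proceeds step by step and yields an operational $k$-pebble transducer computing the same function, closing the equivalence.
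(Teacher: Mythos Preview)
Your easy direction is fine and matches the paper's one-line appeal to the standard construction.

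The hard direction has a genuine gap, and it is exactly the place where the paper's argument differs from yours. Your plan is to evaluate the \mso formula that determines the next transition by a two-way subroutine that ``uses the existing pebbles as read-only markers, without installing any new ones''. At stack height $j<k$ this is harmless: you can push pebble $j{+}1$ and let it play the role of the scanning head while pebbles $1,\ldots,j$ sit as markers. But at maximal height $k$ there is no spare slot; the scanning head \emph{is} pebble $k$, and once you move it you have lost the one piece of information---the current head position---that the formula depends on and that you must return to afterwards. Recovering that position is not a local matter: there may be many configurations with the same state and the same bottom stack $p_1,\ldots,p_{k-1}$, so nothing finite in the control can single out where you came from. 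Your interpolation-based normalisation does not rescue this. The padded configurations you insert are identified by (padding state, $p_1,\ldots,p_{j-1}$, intermediate position), and the same tuple can arise from several distinct original moves that cross that position; disambiguating them forces you to carry the original head as an extra coordinate, which pushes the dimension to $k{+}1$. Post-composing with an erasing homomorphism does not repair the collision inside the interpretation itself.

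The paper's proof sidesteps both problems by introducing an \emph{intermediate model}: classical tests at heights $<k$, \mso transitions only at height $k$. The inclusion $\text{intermediate}\supseteq\text{\mso model}$ is exactly your subroutine idea, but applied only where a spare pebble is available. For $\text{classical}\supseteq\text{intermediate}$, the paper observes that between two consecutive height-$(k{-}1)$ configurations the subcomputation moves only pebble $k$ over a string marked by $p_1,\ldots,p_{k-1}$; this is a one-pebble \mso transducer, and the case $k=1$ of the lemma---i.e.\ Engelfriet--Hoogeboom---converts it to a classical two-way machine. Invoking the $k=1$ result as a black box on these maximal-height subcomputations is the missing idea in your write-up; once you do that, the normalisation detour is unnecessary.
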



 From now on, when talking about pebble transducers, we use the model from Definition~\ref{def:pebble-transducer}.

\paragraph*{The pebble hierarchy does not coincide with growth rates.} Pebble transducers compute the same string-to-string functions as \mso interpretations, see
~\cite[Theorem 7]{msoInterpretations}. However, the construction of a pebble transducer from an \mso interpretation in~\cite{msoInterpretations} increases the dimension. In this section, we prove that the tradeoff is indeed necessary: already the quadratic growth polyregular functions cannot be captured by any finite level of the pebble hierarchy (the hierarchy of polyregular functions that is indexed by the number of pebbles needed to compute a function). 

\begin{theorem}\label{thm:lower-bound}
    For every $k$ there is a  polyregular function that has quadratic growth rate and which is not recognized by any $k$-pebble transducer.
\end{theorem}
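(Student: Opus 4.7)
The plan is to exhibit, for each $k$, an explicit polyregular function $f_k$ of quadratic growth that no $k$-pebble transducer can compute. A preliminary remark shapes the approach: by Engelfriet--Hoogeboom, the linear level of the pebble hierarchy already coincides with all linear polyregular functions, so the separation cannot be obtained by first separating the pebble hierarchy at linear growth and then padding to quadratic. It has to be proved directly at the quadratic level, by choosing $f_k$ whose output is sufficiently two-dimensional in structure to force arbitrarily deep stack nesting.

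For the construction I would take $f_k$ to be a function whose output is a short gadget for each pair of input positions $(i,j)$, with the gadget's content encoding the answer to a \emph{nested} navigation problem of depth $k$ between the positions $i$ and $j$ (for instance, iterating $k$ alternations of a \enquote{nearest match} operator, or running some fixed $k$-fold composition of reversal-like primitives inside the interval $[i,j]$). The output is manifestly of size $\Theta(n^2)$, and definability by an \mso interpretation of dimension $2$ is immediate from this description, which by Theorem~\ref{thm:optimization} confirms $f_k$ is polyregular of quadratic growth.

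The lower bound is the main obstacle. Stack discipline (Definition~\ref{def:pebble-transducer}) forces the run of a $k$-pebble transducer $T$ to decompose into nested push/pop cycles, so that the output emitted between a push of the $(k{+}1)$-st pebble and its matching pop is determined, modulo finite state, by the positions of the $k$ pebbles currently on the stack and the types of the intervals they carve out. The plan is to run a Ramsey/pumping argument on configurations of $T$: given any $k$-pebble transducer $T$, pump the input to find a configuration pattern that repeats and can be shifted without $T$ noticing, then exhibit an input in the domain of $f_k$ where some required output symbol depends on information from $k+1$ independent positions, so that $T$ --- which can observe only $k$ of them at once --- must confuse two inputs that $f_k$ separates.

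The technical difficulty is to combine such an indistinguishability argument with the combinatorics of $f_k$ in a way that rules out \emph{every} $k$-pebble transducer, not just those of a bounded state size. This is likely to require a careful Myhill--Nerode-style analysis of pebble-transducer configurations (tracking state, stack layout, and the equivalence classes of \enquote{what the transducer can still output from here}) together with a tailor-made choice of $f_k$ aligned with that analysis, so that $f_k$ forces the equivalence classes to grow unboundedly with the depth of nesting --- and hence with the number of pebbles --- while the output remains quadratic.
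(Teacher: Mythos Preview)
Your proposal is a plan rather than a proof, and the two missing pieces are exactly the hard ones.

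First, you do not define $f_k$. Phrases like ``iterating $k$ alternations of a nearest-match operator'' or ``$k$-fold composition of reversal-like primitives inside $[i,j]$'' are suggestive but do not pin down a function, and in particular give no handle for a lower bound. The paper's choice is concrete and structural: the \emph{alternating square} of a balanced depth-$k$ tree (Section~\ref{sec:zebra}). The point of this function is that the output tree has height $2k$, and every subtree of height $\ell$ in the output corresponds to an \emph{origin pair} of input subtrees whose heights sum to $\ell$. This tight coupling between output-tree depth and pairs of input subtrees is what drives the induction.

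Second, your lower-bound sketch --- pump to find repeated configuration patterns, then argue that some output symbol depends on $k{+}1$ independent positions --- does not work as stated, and you correctly flag the obstacle: a $k$-pebble transducer with \mso transitions can precompute arbitrary regular information about the whole input, so ``depends on $k{+}1$ positions'' is not a usable invariant. The paper sidesteps this entirely. It first introduces pebble transducers \emph{with atoms} (Section~\ref{sec:atom-lower-bound}), where the only way to output an atom is to have the head on it; this restriction makes a clean induction possible (Lemma~\ref{lem:zebra-proof}): a run of height~$\ell$ in the configuration tree can touch only constantly many height-$(\ell{-}1)$ subtrees of the output. The induction step uses an \mso combinatorial fact (Claim~\ref{claim:squares}) about squares in definable binary relations, not Ramsey or Myhill--Nerode. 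Finally, the Deatomization Theorem (Section~\ref{sec:proof-of-deatomization}) transfers the atom lower bound to ordinary pebble transducers; this transfer is the longest argument in the paper and is where the delicate pumping and affine analysis actually lives. Your proposal has no analogue of either the atom abstraction or the deatomization step, and without them the pumping you envisage has no leverage against \mso-defined transitions.
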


Since we have already proved that quadratic growth rate is the same as being defined by an \mso interpretation of dimension two, an alternative phrasing of the above theorem is that    \mso interpretations of dimension $k=2$ define strictly more functions than two pebble transducers, or three pebble transducers, etc. In other words, imposing the stack discipline on an \mso interpretation might result in an arbitrary increase in its dimension.

Before proving the lower bound from Theorem~\ref{thm:lower-bound}, we observe that there are no problems\footnote{However, in the case of \emph{for-transducers} (one of the equivalent models defining polyregular functions~\cite[Section~3]{bojanczykPolyregularFunctions2018}), a similar phenomenon appears already for functions of linear size increase: for every $k \in \set{1,2,\ldots}$ there is a linear regular function that requires at least $k$ nested loops in a for program that recognizes it. We do not describe this example in detail; the idea is to nest the reverse operation $k$ times.
} for  functions of linear growth. This is because in the case of $k=1$, stack discipline is a vacuous condition, and therefore one pebble transducers compute exactly the same function as \mso interpretations of dimension one.

This section is devoted to proving Theorem~\ref{thm:lower-bound}. We begin by illustrating the proof strategy with a function that has quadratic growth, and yet nevertheless requires three pebbles to be computed\footnote{A variant of this  function was first suggested by L{\^{e}} Th{\`{a}}nh Dung Nguy{\^{e}}n and Ga\"etan Dou\'eneau-Tabot.}.  
This function, which will be called \emph{block squaring},  inputs a sequence of $n$ blocks of $a$ letters delimited by brackets and outputs each pair of blocks:
\begin{align*}
\langle a^{k_1} \rangle 
\cdots 
\langle a^{k_n} \rangle 
\quad \mapsto \quad 
\langle a^{k_1} | a^{k_1} \rangle 
\cdots 
\langle a^{k_i} | a^{k_j} \rangle 
\cdots 
\langle a^{k_n} | a^{k_n} \rangle 
\end{align*}
The pairs of blocks in the output string are ordered lexicographically, as in the following example
\begin{align*}
  \langle a^{1} \rangle 
  \langle a^{2} \rangle 
  \langle a^{3} \rangle 
  \ \mapsto \ &
  \langle a^{1} |  a^{1} \rangle 
  \langle a^{1} |  a^{2} \rangle 
  \langle a^{1} |  a^{3} \rangle 
  \langle a^{2} |  a^{1} \rangle 
  \langle a^{2} |  a^{2} \rangle \\ &
  \langle a^{2} |  a^{3} \rangle 
  \langle a^{3} |  a^{1} \rangle 
  \langle a^{3} |  a^{2} \rangle 
  \langle a^{3} |  a^{3} \rangle 
\end{align*}
 If the input is ill-formatted, i.e.~it does not belong to the regular language $(\langle a^* \rangle)^*$, then the output is empty. 
The growth rate of this function is easily seen to be quadratic. We can compute the function using three pebbles as follows: if there are $n$ blocks in the input, then the first two pebbles range over pairs $(i,j)$ blocks, ordered lexicographically. The lexicographic order is consistent with stack discipline, with coordinate $i$ corresponding to the bottom of the stack. Once we have selected such a pair, we need to output the $i$-th block and the $j$-th block. Since the pebble pointing to block $j$ is at the top of the stack, there is no need for extra pebbles to output the $j$-th block. However, to copy the $i$-th block without losing the pebble that points to the $j$-th block, we need an extra third pebble. 

It remains to show the lower bound for block squaring, i.e.~that it cannot be computed by a two pebble transducer. The intuitive reason was described in the previous paragraph; when we want to copy a block from the input to the output, the head of the pebble transducer should be pointing to that block.  However, this idea is not exactly correct -- for example, a pebble transducer could first check if all input blocks have length exactly two, and for such inputs, it could use a specially crafted procedure that takes advantage of this knowledge. Our lower bound proof needs to take into account such pebble transducers. 

Because of such difficulties,  in Section~\ref{sec:atom-lower-bound} we begin by studying an abstraction of the function described above, which uses elements from an infinite $\atoms$ to represent blocks of the form $\langle a^n \rangle$.  The elements of this set will be called \emph{atoms}, and we will use a transducer model which is not allowed to inspect the atoms in any way, and can output atoms only by indicating an atom with its head. The corresponding abstraction of the block squaring function  is the function 
\begin{align*}
\myunderbrace{a_1 \cdots a_n \in \atoms^* \qquad \mapsto  \qquad a_1 a_1 \cdots a_i a_j \cdots a_n a_n,}{call this function \emph{atom squaring}}
\end{align*}
in which the atom pairs are ordered lexicographically.
This function is quadratic, but we will show that it needs at least three pebbles, under a suitable adaptation of pebble transducers that  handles atoms on input and output. The proof of the lower bound for three atoms will be rather straightforward, because of the strong constraints on how pebble transducers can handle atoms. Later, we will show that lower bounds on pebble transducers with atoms can be automatically lifted to lower bounds without atoms. 

Here is the plan for the rest of this section. In  Section~\ref{sec:atom-lower-bound}, we introduce a variant of pebble transducers that can handle atoms, and we show that for this variant, there are functions of quadratic growth that require any number of pebbles $k$ to be computed. Next, in  Section~\ref{sec:proof-of-deatomization}, we show that the lower bounds with atoms can be lifted to lower bounds without atoms, thus completing the proof of Theorem~\ref{thm:lower-bound}. The lifting result needs to deal with many technicalities, and it is the longest proof in this paper. Nevertheless, we believe that the conceptual essence of the lower bound is captured already in Section~\ref{sec:atom-lower-bound}, which uses the easier setting with atoms. 

\smallskip
\begin{remark}
In this paper, atoms are used to define  computation models for which lower bounds are easier to prove. Another example of this approach can be found in~\cite[Theorem III.1]{bojanczykTuringMachinesAtoms2013}, where it is shown that Turing machines with atoms cannot be determinized (even if one does not care about running time). In the present paper, unlike in~\cite{bojanczykTuringMachinesAtoms2013}, lower bounds  with atoms can be lifted to lower bounds without atoms.
\end{remark}

\subsection{Pebble transducers with atoms and their lower bounds}
\label{sec:atom-lower-bound}
In this section, we describe an extension of pebble transducers that can  deal with strings that contain atoms. We also prove that for every $k$, there is a function of quadratic growth that needs at least $k$ pebbles to be computed in this model.

\paragraph*{Pebble transducers with atoms.} We begin by describing the model. 
The idea is that the atoms are handled in a very restricted way: the only way to produce an atom in the output is to copy the atom that is underneath the head. This restriction will significantly simplify lower bound proofs.

The model of $k$-pebble transducers is extended to cover atoms in the following way. The input and output alphabets are of the form 
\begin{align*}
\myunderbrace{\Sigma + \atoms}{the input alphabet is \\ 
the disjoint union of \\ some finite set $\Sigma$ \\ and the atoms}
\hspace{2cm}
\myunderbrace{\Gamma + \atoms}{the output alphabet is \\ 
the disjoint union of \\ some finite set $\Gamma$ \\ and the atoms}.
\end{align*}
The letters from the finite alphabets $\Sigma$ and $\Gamma$ will be used to encode formatting symbols, such as separators or brackets.
The transitions are defined by \mso formulas in the same way as without atoms, with the input string viewed as a structure over the vocabulary 
\begin{align*}
\myunderbrace{x \le y}{order on\\ positions}
\qquad 
\myunderbrace{a(x)}{labels for \\ $a \in \Sigma$}.
\end{align*}
In particular, if a position is labeled by an atom, then it satisfies none of the predicates $a(x)$ for $a \in \Sigma$. This means that, unlike for the usual logics for atoms~\cite{segoufin2006automata}, there is  no way of comparing input atoms to each other, in particular, the transducer has no way of checking if two input positions carry the same atom\footnote{The model described here is meant to be a tool in the lower bound proof. It is not meant to be a proposal for polyregular functions on infinite alphabets. Such a proposal would likely involve some mechanism of checking if two input positions carry the same atom.}. To create atoms in the output string, we extend the output mechanism as follows: for each state of the transducer, there is an associated output letter, which is either a  letter from $\Gamma$, or a designated letter called ``atom under the head''. This letter determines the output produced by a configuration with the state, with the designated letter producing the atom under the head.  If the letter under the head is not an atom, or the state has stack height zero and there is no head, then the special letter is replaced by the empty string. 

This completes the definition of pebble transducers with atoms. When we speak of a pebble transducer computing a function that uses atoms in its alphabet, this is the model that we refer to. The rest of Section~\ref{sec:atom-lower-bound} is devoted to lower bounds for this model.

\subsubsection{Atom squaring needs three pebbles}
\label{sec:atomic-map-power}
We begin by explaining how the atom squaring function
\begin{align*}
a_1 \cdots a_n \in \atoms^* \qquad \mapsto  \qquad a_1 a_1 \cdots a_i a_j \cdots a_n a_n
    \end{align*}
can be computed using three pebbles, but not with two. 

Here is a description of the upper bound, i.e.~a three pebble transducer that computes the function. The transducer has six states:
\begin{align*}
\myunderbrace{p_0}{stack\\ height $0$}
\qquad 
\myunderbrace{p_1,q_1}{stack\\ height $1$}
\qquad 
\myunderbrace{p_2,q_2}{stack\\ height $2$}
\qquad 
\myunderbrace{p_3}{stack\\ height $3$}.
\end{align*}
The transducer begins in state $p_0$. Instead of describing the transitions in detail, we show in the following picture a prefix of the accepting run on an input string $123$:
\mypic{33}
The general idea is that the first two pebbles on the stack are used to systematically explore all pairs of input positions in lexicographic order. The purpose of the third pebble is that sometimes we want to output the atom from the first pebble in the stack, but the  model only allows outputting the atom under the head. For this reason, a third pebble needs to be pushed. 


We now prove the lower bound.
\begin{lemma}\label{lem:two-pebble-counter-example-atoms}
 The atom square  function
 is not recognized by any pebble transducer that has only two pebbles.
\end{lemma}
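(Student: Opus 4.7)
The plan is a proof by contradiction. Suppose $\mathcal{T}$ is a $2$-pebble transducer with state set $Q$ of size $N$ that computes atom squaring. Fix $n \gg N$ and consider the input $w_n = a_1 \cdots a_n$ consisting of $n$ pairwise distinct atoms. Because atoms carry no finite-alphabet label, every \mso formula used by $\mathcal{T}$ reduces to a formula over the pure linear order on input positions; consequently the run of $\mathcal{T}$ on $w_n$ is determined by $n$ alone, and the $2n^2$ atom-producing configurations must place their heads at the specific input positions prescribed by atom squaring.

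The first step is to partition this run into \emph{sub-phases}: a sub-phase is a maximal segment in which the stack is non-empty and the bottom pebble is unchanged. Its boundaries are push-on-empty, pop-to-empty, or a singleton-stack move $(b) \to (b')$ swapping the sole pebble. Inside a sub-phase with bottom $b$, every configuration has stack $(b)$ or $(b,y)$; every push $(b)\to(b,y)$ computes $y$ as an \mso-definable function of $b$ that depends on the current state, and only boundedly many (at most $N$) such formulas are available. By an Ehrenfeucht--Fra\"iss\'e analysis, \mso-definable unary functions on a pure linear order are piecewise shifts $b \mapsto b+c$ or constants, so the set of bottoms ever appearing in the run lies in a union of at most $O(N^2)$ arithmetic progressions starting from the $\leq N$ push-on-empty constants.

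Next, I trace how the configurations in block $1$ arise. Block $1$ outputs the pattern $a_1 a_1 a_1 a_2 a_1 a_3 \cdots a_1 a_n$, which requires $n$ configurations $\beta_1,\ldots,\beta_n$ producing the second-coordinate atoms with heads at the $n$ distinct positions $1,2,\ldots,n$, interleaved with $n+1$ configurations $\alpha_1,\ldots,\alpha_{n+1}$ producing the first-coordinate atoms $a_1$ with head at position $1$. Since each sub-phase contributes boundedly many distinct heads via pushes (the possible $y$-values are $\phi_q(b)$ for $\leq N$ formulas $\phi_q$, giving at most $N$ values for fixed $b$), at least $\Omega(n/N)$ sub-phases must contribute to block $1$, and their bottoms must take $\Omega(n/N)$ distinct values.

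The technical heart of the proof, and the main obstacle, is to use the rigid interleaving of head-at-$1$ configurations to derive a contradiction. The plan is a pigeonhole on the $\Omega(n/N)$ contributing sub-phases: since only $N$ entering states and $O(N^2)$ bottom arithmetic progressions are available, for large $n$ two contributing sub-phases must enter with the same state and have bottoms $b_1,b_2$ in the same progression. \mso shift-invariance on pure linear orders then forces the two sub-phases' internal dynamics to be identical up to the shift $\delta = b_2-b_1$, so the atoms they produce are related by $\delta$. However, each of these sub-phases must also emit the interleaved atom $a_1$, whose head must lie on the fixed input position $1$; the position $1$ is not shift-invariant under any nontrivial $\delta$, so the two sub-phases cannot simultaneously satisfy both the shifted output pattern and the head-at-$1$ constraint, yielding the desired contradiction.
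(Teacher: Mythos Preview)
Your argument has a genuine gap at the step where you claim that ``each sub-phase contributes boundedly many distinct heads via pushes (the possible $y$-values are $\phi_q(b)$ for $\le N$ formulas $\phi_q$, giving at most $N$ values for fixed $b$).'' This is false. A push $(b)\to(b,y)$ does determine $y$ from $b$ and the state, but immediately afterwards the transducer may perform \emph{move} transitions at height~$2$, sending the head through arbitrarily many positions while the bottom pebble stays at $b$. A single sub-phase can therefore visit all $n$ input positions with its head; the only constraint is that each pair (state, head position) occurs at most once, so each position is visited at most $|Q|$ times. Your conclusion that $\Omega(n/N)$ sub-phases are needed to cover the $n$ distinct head positions of the $\beta_j$'s does not follow from the stated reason. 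The subsequent shift-invariance analysis (piecewise-shift structure of \mso-definable functions, arithmetic progressions of bottoms, pigeonhole on entering states) is built on this broken step and, even independently, would need considerably more justification than the sketch you give.

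The paper's proof avoids all of this machinery. It uses the same decomposition into sub-phases with a fixed bottom pebble, but then exploits the \emph{output structure} of atom squaring rather than the dynamics of the transducer. In a sub-phase, the head visits each input position at most $|Q|$ times, so each atom is produced at most $|Q|$ times. Now observe that in the output string of atom squaring, the first-coordinate atom changes only once every $2n$ positions; hence any contiguous output segment of length exceeding a fixed constant (roughly $4|Q|$) already contains more than $|Q|$ occurrences of a single atom. Since the output of a sub-phase is a contiguous segment, each sub-phase outputs at most a constant. There are only $O(n)$ sub-phases (at most $|Q|\cdot n$ height-one configurations), so the total output is $O(n)$, contradicting the required $2n^2$. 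No Ehrenfeucht--Fra\"iss\'e analysis or classification of \mso-definable functions is needed.
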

\begin{proof} 
    Consider a pebble transducer with two pebbles.
    In a run of this transducer, the number of configurations of stack height one (i.e.~with a pebble stack that has only one pebble) is linear in the input string. By splitting a run along such configurations, we can decompose every run into a linear number of subruns, such that in each  subrun,  pebble one stays fixed and only pebble two can be moved (or is not present). To complete the proof of the lemma, we will show that each subrun can produce an output of at most constant size, and therefore the entire output of the pebble transducer can be at most linear, and thus shorter than the output of atom squaring.

    Consider then a subrun  where pebble one is fixed, and the second pebble is moving. In this run, the head can visit each position at most once per state, and therefore each atom can be repeated in the output at most a constant number of times, because an atom is output only when it is under the head. (Here, we assume that all atoms in the input string are distinct.) If the input to atom squaring has $n$ letters, then in the output string the first coordinate is changed at most once every $n$ positions, and therefore the output size for a run where pebble one is fixed cannot exceed a fixed constant. 
 \end{proof}

\subsubsection{Alternating squaring}
\label{sec:zebra}
In Section~\ref{sec:atomic-map-power}, we presented a quadratic function with atoms that needs three pebbles to be computed. In this section, we strengthen the lower bound to an arbitrary number of pebbles, as stated in the following lemma.
\begin{lemma}\label{lem:zebra-atoms}
    For every $k \in \set{1,2,\ldots}$ there is a  function of quadratic growth (with atoms) that can be computed by a pebble transducer that uses  $2k+1$ pebbles, but not by one that uses $2k$ pebbles.
\end{lemma}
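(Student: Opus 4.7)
The plan is to define, for each $k \ge 1$, an alternating squaring function $f_k$ whose output has quadratic size but whose generation requires $2k+1$ pebbles. The function $f_k$ is a generalization of the atom squaring of Section~\ref{sec:atomic-map-power}, with $k$ nested levels of zigzag alternation, so that atom squaring itself is $f_1$. Each extra level of alternation forces a pebble transducer to use two more pebbles (one for each index introduced by the new level), while the single extra pebble beyond $2k$ is still needed, as in the base case, to dereference atoms that are below the top of the pebble stack.

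For the upper bound I would construct a $(2k+1)$-pebble transducer by induction on $k$. The base case $k=1$ is the three-pebble transducer exhibited at the start of Section~\ref{sec:atomic-map-power}. For the inductive step, the outermost two pebbles implement the additional alternation layer: they iterate over a pair of positions in the input and, for each such pair, invoke the $(2k-1)$-pebble transducer for $f_{k-1}$ on a sub-region determined by the outer pebbles. Because $f_{k-1}$ already uses its top pebble to dereference atoms, adding two outer pebbles is enough.

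For the lower bound I would also proceed by induction on $k$, with base case Lemma~\ref{lem:two-pebble-counter-example-atoms}. For the inductive step, assume for contradiction that $f_k$ is computed by some transducer $T$ with $2k$ pebbles. Generalising the stack-decomposition argument of Lemma~\ref{lem:two-pebble-counter-example-atoms}, partition an accepting run of $T$ along configurations whose stack height is $\le 2$. Between two consecutive such configurations, pebbles $1$ and $2$ are fixed while pebbles $3, \ldots, 2k$ move; regarded on the appropriate sub-input this sub-computation behaves like a $(2k-2)$-pebble transducer with atoms. By choosing input instances on which $f_k$ restricted to suitable sub-regions agrees (up to fixed formatting letters from the finite alphabet $\Gamma$) with $f_{k-1}$ applied to subsequences, one extracts from $T$ a $(2k-2)$-pebble transducer computing $f_{k-1}$, contradicting the inductive hypothesis.

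The main obstacle will be the extraction step in the lower bound: the $2k$-pebble transducer can choose adaptively how it partitions its run, so the reduction to a $(2k-2)$-pebble transducer for $f_{k-1}$ must be robust to many strategies. This requires designing $f_k$ so that the only way to produce its specific alternation pattern is via a two-tier decomposition that matches the definition, and it is likely where the alternating structure (rather than a simpler iterated structure) matters. A secondary technical point is to verify that the extracted transducer really acts on input of length $\Theta(n)$ and produces output of size $\Theta(n^2)$, so that a genuine contradiction with the inductive hypothesis is obtained; this should follow from the quadratic growth of $f_{k-1}$ once one shows that sub-computations of bounded stack height can contribute at most a lower-order amount of output, exactly as in the base case argument of Lemma~\ref{lem:two-pebble-counter-example-atoms}.
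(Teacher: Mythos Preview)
Your upper bound sketch is essentially the paper's: the $2k$ nested loops implement the alternating structure, and the extra $(2k{+}1)$-st pebble is needed to output the atom under a pebble that is not the current head. That part is fine.

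The lower bound plan has a genuine gap. You propose to fix the bottom two pebbles, view the remainder as a $(2k{-}2)$-pebble computation, and extract from it a transducer for $f_{k-1}$ on some ``sub-region''. But pebbles $3,\ldots,2k$ are not confined to any sub-region; they may range over the whole input, and the sub-run need not produce anything resembling a complete $f_{k-1}$ output. A $2k$-pebble transducer may interleave the production of different parts of the output tree across many such sub-runs, so there is no canonical way to associate one sub-run with one instance of $f_{k-1}$. You correctly flag this extraction as the main obstacle, but the proposal gives no mechanism to overcome it, and I do not see how to make the two-pebbles-at-a-time reduction work.

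The paper's argument is structured differently in two respects. First, the induction is on the \emph{height $\ell$ of a sub-run in the configuration tree}, peeling off \emph{one} pebble at a time (so $2k$ steps, not $k$). Second, instead of extracting a smaller transducer, the paper proves a counting statement: a run of height $\ell$ can \emph{touch} only boundedly many subtrees of the output of height $\ell-1$. The induction step hinges on a combinatorial lemma about \mso-definable ternary relations (Claim~\ref{claim:squares}): if the set $\{(x,y): \exists z\,\varphi(x,y,z)\}$ contains an $n\times n$ grid, then some \emph{single} value of $z$ already witnesses $\Omega(n)$ pairs. This pigeonhole-via-types argument is what allows fixing one pebble (not two) while retaining unboundedly many touched origin pairs, and it is the idea missing from your plan.
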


In the proof of the lemma, it will be more convenient to think of the inputs and outputs as being trees of bounded height; these trees can then be represented as strings to get a string-to-string function as required by the lemma. 

The lemma will be witnessed by transductions that are based on a tree operation, called \emph{alternating product}.  For two trees $s$ and $t$, their 
 {alternating product} is defined as follows by induction on the height of $t$. When the height of $t$ is nonzero, then the alternating product is the tree whose root label is the pair (root label of $t$, root label of $s$), and where the child subtrees are all trees that are obtained by taking the alternating product of $s$ with some child subtree of $t$ (listed in the same order as the children of $t$). When the height of $t$ is zero, i.e.~$t$ is just one node, then the root of the alternating product is defined in the same way, and there are no other nodes. Define the \emph{alternating square} of a tree to be the alternating product of the tree with itself. Here is a picture of a tree and its alternating square.
\mypic{30}
The alternating square operation doubles the height of the input tree.  We will only apply this operation to trees which are balanced, i.e.~all root-to-leaf paths have the same length. In this case, the leaves of the output tree  are exactly the pairs of leaves of the input tree.

We will prove the lemma by using the following function: the input is a balanced tree of height $k$ with nodes labeled by atoms, and the output is its alternating square. To view this function as a string-to-string operation, we encode trees as strings, as in the following example based on the picture above:
\begin{eqnarray*}
&\myunderbrace{11 \langle  12 \langle 22 \langle 23 \langle 33\ 34 \rangle 24 \langle 43\ 44 \rangle  \rangle  \cdots 58 \langle 86\ 87\ 88 \rangle  \rangle \rangle }{string representation \\ of the output tree}.
\end{eqnarray*}
If the input string is not well formatted, i.e.~it does not represent a balanced tree of height $k$ labeled by atoms, the output of the transducer is the empty string.  It is not hard to see that in the case of $k=1$, we essentially encounter the atomic squaring function, which needed $2k+1 = 3 $ pebbles.

We have already proved that alternating squaring has quadratic growth. To complete the proof of the lemma, we will show that if the inputs are trees of height $k$, then  the function  can be computed using $2k+1$ pebbles, but it not using $2k$ pebbles. The upper bound of $2k+1$ pebbles is straightforward.  The run of the transducer corresponds to a program with $2k$ nested loops as explained below  (the lines in the code coloured \red{red} and \cyan{blue} to underline the alternating character of the loops):

\begin{center}
    \begin{tabular}{l}
        \red{$x_0$ := root}\\
        \cyan{$y_0$ := root}\\
        \red{\hspace{0.1cm} for $x_1$ in children of $x_0$}\\
        \cyan{\hspace{0.2cm}  for $y_1$ in children of $y_0$}\\
        \red{\hspace{0.3cm} for $x_2$ in children of $x_1$}\\
        \cyan{\hspace{0.4cm}  for $y_2$ in children of $y_1$}\\
        \hspace{0.5cm} \dots\\
        \red{\hspace{0.6cm} for $x_k$ in children of $x_{k-1}$}\\
        \cyan{\hspace{0.7cm}  for $y_k$ in children of $y_{k-1}$}\\
    \end{tabular}
\end{center}

Although the program has $2k$ nested loops, a closer inspection reveals that it requires $2k+1$ pebbles to be implemented. The reason is the same as for atomic squaring: in the innermost loop, the program needs to output the pair of labels of the two positions $\red{x_k}$ and $\cyan{y_k}$. The head of the pebble transducer is at the second position  $\cyan{y_k}$, and therefore, in order to output the first position $\red{x_k}$ we need to push another pebble, due to the output mechanism of our transducer model.

The rest of this section is devoted to showing the lower bound, i.e.~that $2k$  pebbles are not enough to compute the alternating square for input trees of height $k$. In the proof, we will show that certain runs can only touch small parts of the output tree, in the following sense.  Recall that all runs considered are parts of an accepting run, and therefore each output symbol produced by a configuration can be attributed to a unique node in the output tree. When we say that a run \emph{touches} a subtree of the output tree, we mean that at least one configuration in the run produces an output that is attributed to this subtree. 

We will show that the output of a run is bounded by a parameter that is related to the tree structure of configurations, as explained below. 
For a configuration $c$, its \emph{descendants} are defined to be all configurations that appear strictly between $c$ and the next configuration that has the same or lower stack height than $c$. If the input is fixed,  the descendant relation imposes a tree structure on the configurations; we will use the name \emph{tree of configurations} for this tree. Here is a picture of the tree of configurations for thee transducer from Section~\ref{sec:atomic-map-power}:
\mypic{39}   Define the \emph{height} of a run to be the height of the smallest subtree in the configuration tree that contains this run.


\begin{lemma}\label{lem:zebra-proof}
    For every $\ell \in \set{1,\ldots,2k}$ there are constants $c(\ell), d(\ell) \in \set{0,1,\ldots}$ with the following property. Consider an input  tree to the $k$-alternating square function, where all atoms are pairwise different, and which has degree at least $d(\ell)$, which means that all non-leaf nodes have at least  $d(\ell)$ children. If  a run  over this input tree has height $\ell$, then it touches at most  $c(\ell)$ subtrees of the output tree that have height  $\ell-1$.
\end{lemma}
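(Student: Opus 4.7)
I would prove Lemma~\ref{lem:zebra-proof} by induction on $\ell$, defining both $c(\ell)$ and $d(\ell)$ recursively so that a pigeonhole/pumping argument can be deployed at each level. The key structural fact to lean on is that the transducer is atom-opaque: on inputs where all atoms are pairwise distinct, the behaviour of a configuration depends only on the state, the formatting labels from $\Sigma$, and the relative shape of the pebble-stack path inside the input tree. In particular, two configurations that differ only by swapping two indistinguishable sibling subtrees of the input produce essentially the same output. For the base case $\ell = 1$, a run living in a height-$1$ subtree of the configuration tree is a root configuration at some stack height $h$ together with a short sequence of move steps and immediately-popped push/pop episodes. Atom opacity combined with the finite state space bounds the number of output positions (i.e.\ height-$0$ output subtrees) touched by such a run by a constant $c(1)$ read off directly from the transducer, provided $d(1)$ is taken large enough so that the head cannot distinguish among too many siblings at the input node it is scanning.

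For the inductive step, take a run $R$ of height $\ell \ge 2$ sitting inside a configuration subtree of height $\ell$, and decompose it into its spine of top-level configurations (all at the root stack height) together with the maximal dangling subruns that hang off the spine, each of height at most $\ell - 1$. By induction, each dangling subrun touches at most $c(\ell - 1)$ output subtrees of height $\ell - 2$; moreover, because all pebbles below the dangling subrun's starting stack height are frozen throughout that subrun, the alternating-square correspondence between input pebble tuples and output positions localizes its output contribution to a bounded number of height-$\ell - 1$ output subtrees, depending only on the frozen lower pebbles. It then remains to bound how many distinct such localizations arise as the spine progresses; here the high input degree $d(\ell)$ enters, since if the spine visited too many sibling subtrees of the same shape, a Ramsey-style pigeonhole over states would force two spine segments to be transducer-equivalent, and a pumping argument would then allow us to shorten the spine without changing the collection of output regions touched, up to a constant. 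Collecting these bounds yields the required $c(\ell)$.

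\paragraph*{Main obstacle.} The most delicate step will be matching the alternating-square input/output correspondence with the pebble dynamics: I need to verify that when the pebbles below a given stack height are fixed, the output the resulting subrun can produce is genuinely confined to boundedly many height-$\ell - 1$ subtrees of the output tree, rather than merely constrained in some weaker sense. This is a geometric calculation inside the alternating-square construction and will require careful tracking of the depths at which pebbles sit in the input tree of height $k$, together with the fact that the alternating pattern of the output pins down exactly which output coordinate is determined by which pebble. The second source of difficulty is balancing the two constants: $d(\ell)$ has to grow fast enough to support the Ramsey-style collapse of the spine, while $c(\ell)$ must stay finite and independent of the input size; getting this balance right, so that the induction actually closes, is where I expect most of the technical effort to go.
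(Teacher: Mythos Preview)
Your inductive decomposition is the natural starting point and matches the paper's, but the step where you close the induction has a genuine gap. After splitting a height-$\ell$ run into its children (each of height $\le \ell-1$), the induction hypothesis bounds each child by $c(\ell-1)$ output subtrees of height $\ell-2$, hence also by $c(\ell-1)$ output subtrees of height $\ell-1$. But the number of children is \emph{linear} in the input, not constant: in the configuration tree, the root of a balanced height-$\ell$ run has as children all configurations one stack level higher that occur before the next return, and there are as many of these as there are head positions. So the bound you obtain is $c(\ell-1)\cdot(\text{linear})$, and the proposed Ramsey/pumping step does not cut this down. We are proving a lower bound, so the transducer and its accepting run are fixed and cannot be shortened; and even state-equivalent children have different head positions and therefore touch different output regions, so pigeonhole over states does not identify their contributions. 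Your claim that a child's output is ``localized depending only on the frozen lower pebbles'' is too strong: those pebbles are identical across all children, so if that were literally true every child would touch the same bounded set and the lemma would be trivial.

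The paper closes the gap by a different and rather specific mechanism. It argues by contradiction and exploits the alternating-square structure through \emph{origin pairs}: each height-$m$ output subtree corresponds to a pair of input subtrees whose heights sum to $m$. If a height-$(\ell{+}1)$ run touched unboundedly many height-$\ell$ output subtrees, then (using that output siblings share the first coordinate of their origin pair, together with the large input degree) the height-$(\ell{-}1)$ origin pairs it touches must contain an $n\times n$ square $X\times Y$. A separate \mso\ type-counting claim then shows that whenever an \mso-definable relation $\varphi(x,y,z)$ has $\exists z\,\varphi(x,y,z)$ containing an $n\times n$ square, some single witness $z$ realizes $\Omega(n)$ of the pairs. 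Taking $z$ to be the head position of a child, and using that only boundedly many children (one per state) share a head position, one child of height $\ell$ touches $\Omega(n)$ height-$(\ell{-}1)$ output subtrees, contradicting the induction hypothesis. The two ingredients you are missing are this origin-pair square extraction and the \mso\ witness-collapsing claim; neither is a pumping argument, and your ``main obstacle'' paragraph is essentially asking to prove the lemma itself one level down.
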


A corollary of the lemma is  that the entire accepting run, which is a run of height $2k$, can touch only a constant number of subtrees of height $2k-1$, and thus it cannot produce the  entire output for the  $k$-alternating square function, thus completing the proof of Lemma~\ref{lem:zebra-atoms}.  It remains the prove the lemma.

\begin{proof}[Proof of Lemma~\ref{lem:zebra-proof}]
    Induction on $\ell$. The induction base of  $\ell=1$ is proved in the same way as in Lemma~\ref{lem:two-pebble-counter-example-atoms}. In the output tree, every subtree of height $1$ will have the same atom repeated in all of its leaves. For a run of height $\ell=1$, the head can be in each position at most once per state, and therefore if the degree of the input tree exceeds the number of states, the run can touch only a constant number of leaves in each subtree of the output tree.

    We now present the induction step, where we prove the lemma for $\ell +1$, assuming that it is true for $\ell$. In the squaring function, there is an injective correspondence, which maps each subtree of the output tree to a pair of subtrees in the input tree, this correspondence is illustrated in the following picture:
    \mypic{31}
    For a subtree of the output tree, the corresponding pair of subtrees in the input tree is called its \emph{origin pair}. The origin pairs are exactly those pairs of subtrees in the input tree where the first coordinate has height that is equal to, or bigger by one than, the height of the second coordinate.  Since an origin pair represents exactly one subtree of the output tree, notions about subtrees of the output can also be applied to the corresponding origin pairs. For example, we say that a run touches an origin pair if it touches the corresponding  subtree in the output.  Likewise we define the height of an origin pair to be the height of the corresponding subtree in the output tree; this is the same as the sum of the heights of the subtrees of the input tree that appear in the origin pair.

    We prove the induction step by contradiction: we will show that if the lemma fails for $\ell+1$, then it fails for $\ell$. In the following claim, we show that a failure for $\ell+1$ implies that runs contain certain large patterns. The patterns are called \emph{$n\times n$ squares}; these are sets of pairs of trees of the form $X \times Y$ where both $X$ and $Y$ have  size $n$.

\begin{claim}\label{claim:large-squares} If the lemma fails for $\ell+1$, then for every $n$ there is a run of height $\ell+1$ such that the input tree has degree at least $n$, and the set of origin pairs of height $\ell-1$ touched by this run contains an $n \times n$ square.
\end{claim}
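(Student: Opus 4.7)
I would prove Claim~\ref{claim:large-squares} by contrapositive. Assume Lemma~\ref{lem:zebra-proof} fails at level $\ell+1$, so that for every choice of constants $c$ and $d$, there exist an input tree $T$ of minimum degree $d$ (with pairwise distinct atoms) and a run $R$ of height $\ell+1$ over $T$ that touches more than $c$ distinct output subtrees of height $\ell$. Given $n$, I choose $d = n$ (to meet the degree requirement) and $c$ as a sufficiently large function of $n$ and $\ell$, to be pinned down by the biclique-extraction step below.

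Each touched output subtree of height $\ell$ corresponds to a distinct origin pair $(A, B)$. By the alternating structure of the squaring operation, each such pair has children at level $\ell - 1$ in the output, obtained by refining exactly one of its two coordinates into one of the $\geq d$ children of that coordinate in the input tree (which coordinate is refined is determined by the parity of $\ell$). Since the run touches each height-$\ell$ output subtree, it must also touch at least one of its height-$(\ell-1)$ children, yielding a set of at least $c$ touched origin pairs at level $\ell - 1$.

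To extract the $n \times n$ square, I would view the set of touched level-$(\ell-1)$ origin pairs as the edge set of a bipartite graph on input subtrees at the two heights that can appear as the first and second coordinates at level $\ell - 1$. Since the number of vertices on each side is bounded polynomially in $|T|$, while the number of edges is at least $c$, choosing $c$ large enough to exceed the K\H{o}v\'ari--S\'os--Tur\'an threshold for $K_{n,n}$ forces the bipartite graph to contain a biclique $K_{n,n}$, which is exactly an $n \times n$ square $X \times Y$ of touched origin pairs at level $\ell - 1$. The main obstacle is the quantitative interplay: $c$ must dominate $C_n \cdot |T|^{2 - 1/n}$ for the K\H{o}v\'ari--S\'os--Tur\'an bound, while $c$ itself is upper-bounded by the number of possible touched pairs (at most $|T|^2 / d^\ell$). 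Picking $c = n^{\Theta(n\ell)}$ and using that the failure hypothesis gives arbitrarily large $c$ (at the cost of possibly larger $|T|$) resolves this tension and makes the K\H{o}v\'ari--S\'os--Tur\'an extraction applicable.
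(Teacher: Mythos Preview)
Your K\H{o}v\'ari--S\'os--Tur\'an extraction has a genuine gap. The failure of the lemma at level $\ell+1$ only says that for every $c$ and $d$ there is \emph{some} input tree $T$ of minimum degree $\ge d$ together with a height-$(\ell+1)$ run touching more than $c$ output subtrees of height $\ell$; it gives you no upper bound on $|T|$ in terms of $c$. The K\H{o}v\'ari--S\'os--Tur\'an threshold for a $K_{n,n}$ in your bipartite graph is of order $C_n\,|T|^{2-1/n}$, and since $|T|$ may grow arbitrarily fast with $c$ (for all you have shown, the witness for parameter $c$ could have $|T|=2^{c}$), no choice of $c$ as a function of $n$ and $\ell$ alone forces the edge count above this threshold. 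Your final sentence acknowledges the tension but does not resolve it: the threshold depends on $|T|$, which you never control. A secondary issue is that ``touching a height-$\ell$ subtree implies touching one of its height-$(\ell-1)$ children'' is not literally true, since the run might only output the root label or a bracket of that subtree.

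What the paper exploits, and what your argument does not use, is that a run is an interval of configurations and therefore its output is a \emph{contiguous factor} of the output string. This makes the touched height-$\ell$ output subtrees consecutive in the output tree. Applying the failure hypothesis with parameter $3n$ (so degree $\ge 3n$ and $\ge 3n$ touched height-$\ell$ subtrees), one groups these subtrees into maximal blocks of siblings; by contiguity every non-extremal block is a complete set of siblings, hence has size $\ge n$, and in any case some block has size $\ge n$. Such a block consists of siblings with origin pairs $(s,t_1),\ldots,(s,t_n)$ sharing the first coordinate, and again by contiguity the middle ones are output \emph{in full} by the run, so all of their height-$(\ell-1)$ children $(s',t_i)$, with $s'$ ranging over the children of $s$, are touched. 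That set is already a product $(\text{children of }s)\times\{t_2,\ldots,t_{n-1}\}$, which by the degree bound contains an $n\times n$ square. No extremal graph theory is needed: contiguity plus the alternating structure hand you the square directly.
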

\begin{proof}
    If the lemma fails for $\ell+1$, for every $n$ we can find a run of height $\ell+1$ such that the input tree has degree at least $n$, and the run touches at least $n$ subtrees of the output with height $\ell$. Apply this observation to $3n$, yielding a run of height $\ell+1$ where the input tree has degree at least $3n$ and the run touches at least $3n$ subtrees of the output tree that have height $\ell$.  Consider the list of these at least $3n$ subtrees,  listed in the order that they are touched. Partition this list into intervals, in which subtrees from the same interval are consecutive, i.e.~their roots are siblings. Since the input tree has degree at least $n$, the output tree also has  degree at least $n$, and therefore each interval from the partition, with the possible exception of the first and last intervals, has length least $n$. This means that if the list has length at least $3n$, then some interval has length at least $n$. Summing up, we know that the run must touch at least $n$ consecutive subtrees of the output that have height $\ell$. Let the origin pairs of these consecutive subtrees be 
    \begin{align*}
    (s,t_1),\ldots,(s,t_n).
    \end{align*}
    These pairs share the same first coordinate, because siblings in the output tree have origin pairs that share the first coordinate. The origin pairs touched  by the run will therefore contain the following set
\begin{align*}
\text{children of $s$} \times \set{t_1,\ldots,t_n},
\end{align*}
which consists of height $\ell-1$ origin pairs, and  contains an $n \times n$ square by the assumption on the degree of the input tree being at least $n$.
\end{proof}

In the conclusion of the claim above, we have an $n \times n$ square of origin pairs of height $\ell-1$ inside a run of height $\ell+1$. Inside that run we will find a run of smaller height $\ell$ which uses a number of these pairs that is linear in $n$ and therefore arbitrarily large; thus proving that the lemma fails for $\ell$ and completing the induction step. To prove this, we use the following observation  about squares definable in \mso. 

\begin{claim}\label{claim:squares} Let $\varphi(x,y,z)$ be an \mso formula which selects triples of positions in strings. There is some $\lambda > 0$ with the following property. For every input string, if there  is an $n \times n$ square contained in the set of pairs $(x,y)$ which satisfy
    \begin{align*}
    \exists z \ \varphi(x,y,z),
    \end{align*}
then  for some position $z$,  there are at least $\lambda n$ pairs $(x,y)$ satisfy $\varphi(x,y,z)$. 
\end{claim}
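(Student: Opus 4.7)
The plan is to combine the compositional (Feferman–Vaught) method for \mso on strings with two pigeonhole arguments. Let $d$ be the quantifier depth of $\varphi$; there are only finitely many \mso $d$-types of a string, and the $d$-type of a triple of marked positions $(w,x,y,z)$ is determined by (a) the ordering of $x,y,z$, (b) their labels, and (c) the $d$-types of the (at most four) factors into which $x,y,z$ split $w$. In particular, whether $\varphi(x,y,z)$ holds is fixed by this finite piece of information, which I will call the \emph{pattern} of the triple, and the total number of patterns is bounded by a constant $C_0$ depending only on $\varphi$.

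For the first pigeonhole, for each $(x,y) \in X \times Y$ choose a witness $z(x,y)$ with $\varphi(x,y,z(x,y))$, and let $\pi(x,y)$ be the pattern of the resulting triple. Some pattern $\pi^*$ occurs on a subset $S \subseteq X \times Y$ with $|S| \ge n^2 / C_0$, and any triple matching $\pi^*$ satisfies $\varphi$ (since at least one such triple does). By symmetry we may assume $\pi^*$ has ordering $x < y < z$; the other five orderings and the degenerate cases in which some of $x,y,z$ coincide are handled identically. The key observation is that matching $\pi^*$ decomposes cleanly: the conditions on $(x,y)$, namely the types of the prefix up to $x{-}1$ and of the factor between $x$ and $y$ together with the labels at $x$ and $y$, involve only $(x,y)$; whereas the conditions on $z$, namely $z > y$, the label at $z$, and the types of the two factors on either side of $z$, depend on $y$ and $w$ but not on $x$. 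Hence the set $Z(y)$ of valid witnesses $z$ depends only on $y$ (and $w$), and is nonempty for every $y \in Y_S := \{y : \exists x,\ (x,y) \in S\}$.

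The second pigeonhole closes the argument. Since $|Y_S| \le n$, picking for each $y \in Y_S$ some canonical element $z^*(y) \in Z(y)$ gives a map $z^*$ whose image has size at most $n$. The pairs of $S$ distribute among at most $n$ preimages of $z^*$ (pulled back to $S$ via the projection to the $y$-coordinate), so some value $z_0$ is the canonical witness for a collection of $y$'s accounting for at least $|S| / n \ge n/C_0$ pairs of $S$. For every such pair $(x,y)$, the triple $(x,y,z_0)$ matches $\pi^*$ -- the $(x,y)$-conditions hold because $(x,y) \in S$, and $z_0 \in Z(y)$ supplies the $z$-conditions -- so $\varphi(x,y,z_0)$ holds. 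This proves the claim with $\lambda = 1 / C_0$. The only nontrivial ingredient is the compositional method invoked in Step~1; the two pigeonholes are elementary, and the main delicacy is just bookkeeping around the six orderings and the degenerate cases.
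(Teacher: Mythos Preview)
Your argument is clean and correct when the fixed pattern $\pi^*$ places $z$ strictly to one side of the interval spanned by $x$ and $y$ (the orderings $x<y<z$, $y<x<z$, $z<x<y$, $z<y<x$), but the assertion that the remaining orderings ``are handled identically'' hides a genuine gap. Take the ordering $x<z<y$: the four factors of $\pi^*$ are $[1,x)$, $(x,z)$, $(z,y)$, $(y,\mathrm{end}]$, so the two factors adjacent to $z$ involve \emph{both} endpoints. The set $Z(x,y)$ of valid witnesses is then $\{z:x<z<y,\ \text{type}(x,z)=\tau_1,\ \text{type}(z,y)=\tau_2,\ \text{label}(z)=\ell\}$, which depends on $x$ and on $y$ simultaneously; no canonical choice $z^*$ can be made a function of one coordinate alone. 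Your second pigeonhole, which relies on the image of $z^*$ having at most $n$ elements, therefore collapses in this case.

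The paper sidesteps the case split with a single coarser pigeonhole. Instead of fixing the pattern of the triple $(x,y,z)$, it records the type of the witness $z$ relative to the \emph{entire} set $X\cup Y$: which of the at most $2n+1$ gaps of $X\cup Y$ contains $z$, together with the $d$-types of the two factors between $z$ and its neighbours in $X\cup Y$. There are $O(n)$ such types, and by compositionality two witnesses of the same type are interchangeable for \emph{every} pair $(x,y)\in X\times Y$, regardless of where $z$ sits relative to $x$ and $y$. One pigeonhole over the $n^2$ pairs then yields $\Omega(n)$ pairs sharing a common witness, with no second step and no ordering analysis. This is precisely the idea you need to repair the $x<z<y$ and $y<z<x$ cases; once you adopt it there, your first pigeonhole on patterns becomes redundant.
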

\begin{proof} 
    Consider an input string  in which there is an $n \times n$ square of the form $X \times Y$ as in the assumption of the claim. For each pair there is some witness $z$.  Define the \emph{type} of a witness $z$ to be the \mso theory of this witness with respect to the distinguished positions $X \cup Y$. This type is uniquely determined by  the input string, the order relationship of $z$ with the distinguished  positions, and some fixed regular information about the parts of the string between $z$ and the nearest distinguished positions on the left and right. In particular, once the input string is fixed, the possible number of types is at most $cn$ for some constant $c$ that depends only on the formula. It follows that for at least $n^2/cn = n/c$ 
    pairs in the square, the corresponding witnesses have the same type. Witnesses with the same type  can be swapped, thus proving the claim. 
\end{proof}

We now use Claims~\ref{claim:large-squares} and~\ref{claim:squares} to complete the proof of the induction step. In the proof, it will be more convenient to discuss special runs, called balanced runs. A run is called \emph{balanced} if it arises by taking some configuration and all of its descendants in the tree of configurations. In other words, we take a configuration and continue the run until, but not including, the nearest configuration with  the same or smaller number of pebbles. By definition, balanced runs are in one-to-one correspondence with configurations; therefore we can apply to balanced runs notions that are defined for configurations, such as the child relation from the tree of configurations, or the position of the head.  
Consider  a balanced run $\rho$ of height $\ell+1$. We represent this run as a string over a finite alphabet in the following way:
\mypic{29}
For a balanced run, consider the following property 
\begin{align*}
\varphi(x,y,z)
\end{align*}
of nodes in the input tree:  the pair of subtrees with roots in $x$ and $y$ is an origin pair of height $\ell-1$ that is touched by some child of the run that has head  position $z$. 
Using the above string representation,  this relation  on input positions can be  formalized in \mso. By definition, 
\begin{align}\label{eq:exists-z}
    \exists z \ \varphi(x,y,z)
    \end{align}
describes exactly the set of origin pairs that have height $\ell-1$ and are touched by  the run with its first configuration removed. This set  is the same as  the set of pairs in the conclusion of Claim~\ref{claim:large-squares} with one pair removed, and therefore we can apply that  claim to conclude that   if the lemma would fail for $\ell+1$, then for every $n$ one could find a run of height $\ell+1$ such that  the set in~\eqref{eq:exists-z} contains an $n \times n$ square. By Claim~\ref{claim:squares}, there would be some position $z$ in the input tree  that admits linear in $n$ number of pairs $(x,y)$ which satisfy $\varphi(x,y,z)$. In other words, there is some position $z$ such that there is a linear in $n$  number of  origin pairs of height $\ell-1$ that are touched by children  with their head in position $z$.  Children of the run have height $\ell$, and since a position $z$ can be used as the head  for at most one child  per state, this would mean that run of height $\ell$ touches a linear in $n$ number of 
subtrees of the output that have height $\ell-1$. This means that the lemma fails for $\ell$,  thus completing the proof of the induction step.
\end{proof}

\subsection{Deatomization}
\label{sec:proof-of-deatomization}
 In this section, we show that the lower bounds with atoms, such as the lower bound proof in Lemma~\ref{lem:two-pebble-counter-example-atoms} or~\ref{lem:zebra-atoms}, can be lifted to lower bounds without atoms, thus completing the proof of Theorem~\ref{thm:lower-bound}. This lifting result is the most technical part of the proof of Theorem~\ref{thm:lower-bound}, and it shows that for each pebble transducer with atoms there is a corresponding pebble transducer without atoms which requires the same number of pebbles as the original one. 

In the proof, we use two important properties of a function
that is computed by a pebble transducer with atoms. Intuitively speaking, these are: (a) the function can only move around or duplicate atoms from the input string, but it cannot compare them to each other; and (b) if atoms are represented by strings over a finite alphabet, then the function can be implemented by a pebble transducer without atoms, using the same number of pebbles. The main result of this section will be that these properties are not only necessary, but they are also sufficient. 

We begin by describing the two properties in more detail.

\paragraph*{Atom-oblivious functions.}
The first condition, about not comparing atoms to each other, will be abstracted by saying that the function commutes with all functions from atoms to atoms. Consider a function 
\begin{align*}
 f : (\Sigma + \atoms)^* \to 
 (\Gamma + \atoms)^*,
\end{align*}
i.e.~a function whose inputs and outputs use atoms and letters from a finite alphabet (as is the case for functions computed by pebble transducers with atoms).
We say that the function  is \emph{atom oblivious} if the  diagram 
\[
\begin{tikzcd}
(\Sigma + \atoms)^* 
\ar[r,"f"]
\ar[d,"\pi"']
&
(\Gamma + \atoms)^* 
\ar[d,"\pi"] \\
(\Sigma + \atoms)^* 
\ar[r,"f"']
&
(\Gamma + \atoms)^* 
\end{tikzcd}
\]
commutes for every input string $w$ and every function $\pi : \atoms \to \atoms$, not necessarily bijective\footnote{Here we consider functions that are not necessarily bijections. If we only require commuting with bijective $\pi$, then the resulting property is called \emph{equivariance} and it is the central property in sets with atoms.}. In the diagram above, the vertical arrows use the natural extension of $\pi$ from atoms to strings that use atoms. The general idea behind atom-oblivious functions is that they are allowed to move around or copy atoms from the input string, but they are not allowed to read them or compare them in any way. By design, any function computed by a pebble transducer with atoms will be atom-oblivious.

\paragraph*{Deatomization.} We now turn to the second property, which is that pebble transducers with atoms can be simulated by pebble transducers over finite alphabets, assuming a representation of atoms by strings over a finite alphabet. We use the representation explained in the following picture:
\mypic{22}
The brackets $\langle$ and $\rangle$ in the above representation, as well as the letter $a$ (which will be called the unit letter) are fresh, and should not be confused with any other symbols that might appear in the alphabets $\Sigma$ and $\Gamma$, e.g.~the brackets used to represent the tree structure in the proof of Lemma~\ref{lem:zebra-atoms}. 
The representation is parameterized by some injective function that maps atoms to atom blocks, i.e.~strings in $\langle a^* \rangle$. Such a function will be called an \emph{atom representation}. 
Throughout this section, we use a colour convention where  red is used for strings to which an atom representation has been applied, i.e.~red variables denote words in which atoms are represented using atom blocks.
 Define a \emph{deatomization} of the function $f$ to be any function $\red f$ (here we use the colour convention) which makes the following diagram commute for every atom representation $\alpha$:
 \[
 \begin{tikzcd}[column sep=3cm]
 (\Sigma + \atoms)^*
 \ar[r,"f"]
 \ar[d," \alpha"'] &
 (\Gamma + \atoms)^*
 \ar[d," \alpha"]\\
 (\Sigma + \langle a^* \rangle)^* 
 \ar[r,red,"\red f"']&
 (\Gamma + \langle a^* \rangle)^*
 \end{tikzcd}
 \]

 \begin{fact}
    If $f$ atom-oblivious, then it has a unique deatomization.
 \end{fact}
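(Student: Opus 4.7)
The plan is to establish both existence and uniqueness simultaneously by observing that the commutation requirement forces $\red f$ on every string in its domain, and that atom-obliviousness is precisely what makes this forced definition consistent.

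First I would record a simple preliminary: if $f$ is atom-oblivious, then for every input $w$, every atom appearing in $f(w)$ already appears in $w$. This is proved by contradiction: if some atom $a \in \atoms$ appears in $f(w)$ but not in $w$, take $\pi : \atoms \to \atoms$ that is the identity on atoms of $w$ but sends $a$ to some other atom; then $\pi(w) = w$, so by atom-obliviousness $\pi(f(w)) = f(\pi(w)) = f(w)$, contradicting the fact that $\pi$ moves $a$.

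Next, the uniqueness half is immediate: every $\red w \in (\Sigma + \langle a^* \rangle)^*$ decomposes uniquely as a concatenation of letters from $\Sigma$ and atom blocks $\langle a^{n_i} \rangle$, so by choosing an atom representation $\alpha$ and a pre-image $w \in (\Sigma + \atoms)^*$ with $\alpha(w) = \red w$ (pick distinct atoms for distinct atom blocks and extend $\alpha$ injectively to the remaining atoms, which is possible because $\atoms$ is infinite), any deatomization $\red g$ must satisfy $\red g(\red w) = \red g(\alpha(w)) = \alpha(f(w))$. This forces $\red f(\red w) := \alpha(f(w))$.

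For existence I would take this forced formula as the definition of $\red f$ and show that it does not depend on the choice of $(\alpha, w)$. Suppose $(\alpha_1, w_1)$ and $(\alpha_2, w_2)$ both satisfy $\alpha_i(w_i) = \red w$. Because both $w_1, w_2$ must use exactly one atom per distinct atom block of $\red w$, there is a bijection $\sigma$ from the atoms of $w_2$ to the atoms of $w_1$ characterised by $\alpha_1 \circ \sigma = \alpha_2$ on the atoms of $w_2$; extend $\sigma$ arbitrarily to a function $\pi : \atoms \to \atoms$. Then $\pi(w_2) = w_1$, so atom-obliviousness gives $f(w_1) = f(\pi(w_2)) = \pi(f(w_2))$. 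By the preliminary observation the atoms occurring in $f(w_2)$ are contained in the atoms of $w_2$, and on these atoms $\alpha_1 \circ \pi$ agrees with $\alpha_2$ by construction; hence $\alpha_1(f(w_1)) = \alpha_1(\pi(f(w_2))) = \alpha_2(f(w_2))$, as required. The fact that $\red f$ is then a deatomization is immediate by its definition.

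The main obstacle is the well-definedness argument in the last paragraph: it is there that atom-obliviousness is really used, and one has to be careful that the function $\pi$ conjugating $w_2$ to $w_1$ genuinely exists (which needs the preliminary observation on atom support, and the fact that atom representations are \emph{injective} so that distinct atoms never collapse to the same block). Every other step is essentially bookkeeping about unique decompositions of strings in $(\Sigma + \langle a^* \rangle)^*$.
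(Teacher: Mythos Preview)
Your proof is correct and follows the same approach as the paper: define $\red f(\red w)$ by pulling $\red w$ back along an atom representation, applying $f$, and pushing forward again, with atom-obliviousness guaranteeing independence from the choices. The paper's version is terser in one respect: instead of choosing one atom per distinct atom block of $\red w$ (as you do, to keep $\alpha$ injective), it chooses a fresh atom for \emph{each occurrence} of an atom block, so that the preimage has pairwise distinct atoms; this lets the well-definedness and uniqueness arguments collapse into the single observation that an atom-oblivious function is already determined by its values on inputs in which no atom is repeated.
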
 
 \begin{proof}
    The unique deatomization works as follows. Given an input string for the deatomization, replace every atom block with a distinct atom (if the same atom block has several occurrences in the input string, a different atom is used for each occurrence), then apply $f$, and finally replace each atom from the output string with the corresponding atom block.  By the assumption on atom-obliviousness, this is the only way that the de-atomization can work. This is because an atom-oblivious function is uniquely defined by the outputs that it produces on inputs in which no atom is used twice.
 \end{proof}

Thanks to the above fact, for atom-oblivious functions we can speak of \emph{the} deatomization.
 Again, it is easy to see that for every pebble transducer with atoms, its deatomization is computed by a pebble transducer without atoms that has the same number of pebbles. The transducer without atoms simply copies the atom block next to the head whenever the transducer with atoms wishes to output that atom.

\paragraph*{The theorem.} As we have remarked above, if a function is computed by a $k$-pebble transducer with atoms, then it is atom oblivious and its deatomization is computed by a $k$-pebble transducer without atoms. The main result of this section is that the implication is in fact an equivalence. 
\begin{theorem}[Deatomization]\label{thm:deatomization}
 A function 
 \begin{align*}
 f : (\Sigma + \atoms)^* \to 
 (\Gamma + \atoms)^*
 \end{align*}
 is computed by a $k$-pebble transducer with atoms if and only if it is atom-oblivious, and its deatomization is computed by a $k$-pebble transducer without atoms.
\end{theorem}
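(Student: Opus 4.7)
The forward direction is routine. A $k$-pebble transducer with atoms is atom-oblivious by design, since its only atom-producing mechanism is ``output the atom under the head'', which commutes with any renaming $\pi : \atoms \to \atoms$. Its deatomization is obtained by inserting a subroutine that, whenever the transducer would output the atom under the head, sweeps the corresponding atom block $\langle a^m \rangle$ and emits $\langle$, then $m$ copies of $a$, then $\rangle$. This subroutine uses only the head pebble, so the pebble count is preserved.

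For the converse, suppose $f$ is atom-oblivious and $\red f$ is computed by a $k$-pebble transducer $\red{\mathcal{T}}$. The plan is to build a $k$-pebble transducer $\mathcal{T}$ with atoms for $f$ by simulating $\red{\mathcal{T}}$ on $\alpha(w)$ for a canonical atom representation $\alpha$ chosen per input (for instance, mapping the $i$-th distinct atom of $w$ to $\langle a^{i-1}\rangle$). A pebble of $\red{\mathcal{T}}$ on a position of $\alpha(w)$ falls into three cases: (i) a $\Sigma$-position, which is also a position of $w$; (ii) a bracket $\langle$ or $\rangle$ of an atom block, which identifies a unique atom position of $w$; or (iii) one of the $a$'s inside a block $\langle a^m\rangle$, which identifies an atom position of $w$ together with an offset in $\set{1,\ldots,m}$. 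Case (iii) is the main obstacle, because $m$ can grow arbitrarily with the input while $\mathcal{T}$ has only finite state.

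The proposed resolution is to normalize $\red{\mathcal{T}}$ before translating. The key leverage is that $\red f(\alpha(w)) = \alpha(f(w))$ must hold uniformly across every atom representation $\alpha$, including ones that assign arbitrarily large blocks. Combined with the MSO-homogeneity of a run of $a$'s, this should force $\red{\mathcal{T}}$ into a normal form in which a pebble inside a block lies within a bounded distance of $\langle$ or $\rangle$, where the bound depends only on $\red{\mathcal{T}}$. The argument is a pumping analysis on runs of $\red{\mathcal{T}}$ over block-padded inputs: if two deep-interior positions were treated differently on a sufficiently long block, one could pump the block by choosing a different $\alpha$ and violate the rigid correspondence between $\alpha(w)$ and $\alpha(f(w))$. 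A similar pumping argument collapses block-sweep output phases into a single step emitting the block atom.

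Once the normal form is reached, the translation to $\mathcal{T}$ is local: each pebble of $\red{\mathcal{T}}$ becomes a pebble of $\mathcal{T}$ on the corresponding $\Sigma$- or atom-position of $w$, with the bounded boundary-offset absorbed into the state. Configurations of $\red{\mathcal{T}}$ that scan a block to emit $a^m$ collapse into a single configuration of $\mathcal{T}$ in ``atom under the head'' output mode. Stack discipline and the $k$-pebble bound carry over because the translation acts locally on each stack entry. The hardest step will be the normalization itself: eliminating deep-interior pebbles, collapsing block-scan output phases into single steps, and preserving stack discipline must all be accomplished simultaneously, which is where the $\alpha$-uniformity and a detailed analysis of $\red{\mathcal{T}}$'s runs over blocks of varying sizes must be combined carefully.
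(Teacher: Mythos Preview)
Your high-level strategy---normalize $\red{\mathcal T}$, then translate pebble-by-pebble---matches the paper's, and you correctly identify the interior-of-block positions as the central obstacle. However, two of your proposed resolutions have genuine gaps.

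First, your normalization claim that ``a pebble inside a block lies within a bounded distance of $\langle$ or $\rangle$'' cannot hold for the head pebble: to emit a block $\langle a^m\rangle$ of unbounded $m$, the head must visit $\Omega(m)$ interior positions. The paper proves a bounded-distance claim only for \emph{opening} configurations of atom runs, via a pumping argument close to what you sketch. For the remainder of the atom run the head genuinely sweeps interiors, and the real question is \emph{which} interiors and how they determine the output length. The paper's decisive step is an affine analysis: it shows that the output length of an atom run is $\lambda_0 + \sum_i \lambda_i n_i$, where $n_1,\ldots,n_d$ are the lengths of the boundedly many input blocks visited by the head, and then argues---using that $\red f$ must be a correct deatomization for \emph{every} representation $\alpha$---that at most one $\lambda_i$ is nonzero, and in that case $\lambda_i=1$ and $\lambda_0=0$. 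Only this pins down that each atom run either has constant-size output or copies one identifiable input block, after which the head can be repositioned to that block's opening bracket. Your pumping sketch does not supply this; a priori the output length could be, say, the average of two visited block lengths, and then no single ``atom under the head'' instruction could replace the run.

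Second, your canonical $\alpha$ (the $i$-th distinct atom maps to $\langle a^{i-1}\rangle$) works against you. The transducer $\red{\mathcal T}$ may legitimately branch on \mso properties of a block's interior (parity, residues, threshold tests); under your $\alpha$ these tests take different values on different blocks, so the simulation would have to reproduce that branching in the atom model, where atoms are indistinguishable. The paper instead chooses $\alpha$ so that all blocks exceed the constant bound from the normalization \emph{and} share the same \mso theory of the relevant quantifier rank. Under such an $\alpha$, every \mso formula of $\red{\mathcal T}$ evaluated at a tuple of non-interior positions in $\alpha(w)$ is determined by the \mso theory of the corresponding tuple in $w$; this is what makes the pullback to a $k$-pebble transducer with atoms go through without residual dependence on block contents.
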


The above theorem, which is proved in the appendix, completes the proof of Theorem~\ref{thm:lower-bound} about quadratic polyregular functions needing arbitrarily large pebble stacks.  

\begin{proof}[Proof of Theorem~\ref{thm:lower-bound}, assuming the Deatomization Theorem]
    Consider the function from Lemma~\ref{lem:zebra-atoms}. As we have shown, this function has quadratic growth, but it requires at least $2k+1$ pebbles with atoms. Therefore, thanks to the  Deatomization Theorem, its deatomization  also requires at least $2k+1$ pebbles. It is also easy to see that this deatomization has quadratic growth.     
\end{proof}

\bibliographystyle{plain}
\bibliography{bib}

\appendix
\newpage
\section{Quantifier elimination}

In this part of the appendix, we prove Theorem~\ref{thm:quantifier-elimination}, about quantifier elimination for \mso interpretations.

    This theorem is a straightforward corollary of the Factorization Forest Theorem and compositionality of \mso. 
    
    Consider an \mso interpretation that defines the function $f$. Let $\Phi$ be the set of \mso formulas that appear in this interpretation, either as universe formulas or as formulas defining relations of the output structure. We use the following standard result about \mso on strings, which we refer to as \emph{compositionality}.
    
    \begin{lemma}\label{lem:compositionality}
      Let $\Phi$ be a set of \mso formulas, which may have free first-order variables, over the vocabulary of strings over some input alphabet $\Sigma$.
       There is a monoid homomorphism 
       \begin{align*}
       h : \Sigma^* \to M
       \end{align*}
       into a finite monoid, such that for every \mso formula 
       \begin{align*}
       \varphi(x_1,\ldots,x_\ell) \in \Phi,
       \end{align*}
       whether or not a string in $\Sigma^*$ with $\ell$ distinguished positions satisfies the formula depends only on the following information: (a) the order of the distinguished positions and their labels; (b) the values of the homomorphism on the intervals in the input string that separate distinguished positions, as explained in the following picture:
          \mypic{32}
    \end{lemma}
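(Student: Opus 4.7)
The plan is the standard Feferman--Vaught style composition argument for \mso on strings. First, I would reduce to a finite setting: without loss of generality $\Phi$ is finite, and I let $q$ be the maximum quantifier rank of a formula in $\Phi$ and $\ell$ the maximum number of free variables appearing. It is convenient to internalise the distinguished positions by working over the extended alphabet $\Sigma \times 2^{\{1,\ldots,\ell\}}$, so that a labelled string with distinguished positions is just a string. The \emph{$q$-type} of such an (extended) string $w$ is the equivalence class of $w$ under the relation of satisfying the same \mso sentences of quantifier rank at most $q$ over the extended vocabulary. Since up to logical equivalence there are only finitely many such sentences, there are only finitely many $q$-types, and each formula in $\Phi$ is determined by the $q$-type (with the distinguished positions appropriately marked).

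The core step, and the main obstacle, is the \emph{composition theorem}: the $q$-type of a concatenation $u \cdot v$ is determined by the $q$-types of $u$ and $v$ individually. I would prove this via the Ehrenfeucht--Fraïssé game for \mso. Given $u_1, v_1, u_2, v_2$ with $u_1 \equiv_q u_2$ and $v_1 \equiv_q v_2$, I would assemble a winning strategy for Duplicator in the $q$-round \mso game on $u_1 v_1$ versus $u_2 v_2$ out of winning strategies on the two halves: on a first-order move by Spoiler, Duplicator consults the strategy corresponding to the half Spoiler played in; on a second-order (set) move $S$, Duplicator splits $S$ into its restrictions to the two halves, answers each via the individual strategy, and takes the union of the two responses. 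The only subtlety is showing that the induced partial isomorphism after $q$ rounds respects the atomic formulas of the string vocabulary; the label predicates are evaluated pointwise, and the order predicate is evaluated within one half unless it compares a point in the left half with one in the right half, in which case the answer is the same on both sides by construction. This is the step where using the ordered representation (rather than successor) is essential, since it keeps all binary atomic relations compatible with concatenation.

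Once the composition theorem is established, I would define $M$ to be the set of $q$-types of strings over the extended alphabet, with the monoid operation given by concatenation of types (well-defined by the composition theorem) and the identity being the $q$-type of the empty string. Then $M$ is finite and $h \colon w \mapsto \mathrm{type}_q(w)$ is a monoid homomorphism. Finally, given an input string with $\ell$ distinguished positions $p_1 < \cdots < p_\ell$ splitting it into intervals $w_0 a_1 w_1 a_2 \cdots a_\ell w_\ell$ (where $a_i$ is the label of $p_i$), iterated application of the composition theorem shows that the $q$-type of the whole marked string depends only on the labels $a_i$ and on $h(w_0), \ldots, h(w_\ell)$. Since every $\varphi \in \Phi$ is determined by this $q$-type, the conclusion of the lemma follows.
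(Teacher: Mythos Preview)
Your proposal is correct and follows the standard Feferman--Vaught composition argument for \mso on strings. Note, however, that the paper does not actually prove this lemma: it introduces it as ``the following standard result about \mso on strings, which we refer to as \emph{compositionality}'' and uses it without proof as an ingredient in the derivation of Theorem~\ref{thm:quantifier-elimination} from the Factorization Forest Theorem. So there is no approach in the paper to compare against; your sketch supplies precisely the standard proof that the paper takes for granted.

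One small remark on your write-up: you define $M$ as the set of $q$-types over the extended alphabet $\Sigma \times 2^{\{1,\ldots,\ell\}}$, but the homomorphism $h$ in the statement has domain $\Sigma^*$. This is harmless since $\Sigma$ embeds as $\Sigma \times \{\emptyset\}$, and the intervals between distinguished positions carry no marks, but it is worth saying explicitly that $h$ is the restriction of the type map to unmarked strings (or, more economically, that one may take $M$ to be the $q$-types over plain $\Sigma$ and handle the marked one-letter pieces $a_i$ separately, as the statement of the lemma already separates out ``the order of the distinguished positions and their labels'').
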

    
    We use factorization trees for the homomorphism from the above lemma, defined as follows.
    Recall that an \emph{idempotent} is a monoid element $e \in M$ such that $ee=e$. Define a \emph{factorization tree} to be a tree where: 
    \begin{itemize}
    \item every leaf is labeled by a letter from $\Sigma$;
    \item every node that is not a leaf is labeled by the value of the homomorphism on the yield of the subtree of the node;
    \item for every node that has at least three children, there is some idempotent $e$  such that the node and all of its children have label $e$.
    \end{itemize}
    By the Factorization Forest Theorem, there is some $k$ such that every string is the yield of some factorization tree of height at most $k$. Let $\Tt$ be the factorization trees of height at most $k$; this is easily seen to be a tree grammar. Furthermore, a factorization tree can be computed by a linear interpretation~\cite[Section 4]{bojanczykFactorizationForests2009}, which gives us the left part of the diagram in the theorem: 
    \[
    \begin{tikzcd}
    [column sep=1.5cm]
    \Sigma^* 
    \ar[r,"\text{linear }h"]
    \ar[dr,equals, bend right=30]
    & \Tt 
    \ar[d, "\text{yield}"']
    \\
    & 
    \Sigma^*.
    \end{tikzcd}
    \] 
   
   The homomorphism $h$ was chosen so that for every formula defining the interpretation $f$, whether or not this formula selects a tuple of distinguished positions can be determined by the order of the variables, their labels, and the values of $h$ on the infixes separating the distinguished positions. All of this information can be recovered by using quantifier-free formulas in the Simon structure of the factorization tree, see~\cite[Proof of Theorem 2]{bojanczykFactorizationForests2009}. Therefore we get the remaining part of the theorem, namely 
    \[
    \begin{tikzcd}
    [column sep=1.5cm]
    \Tt 
    \ar[dr,"\text{quantifier-free }g"]
    \ar[d, "\text{yield}"']
    \\ 
    \Sigma^*
    \ar[r,"f"']
    &
    \Gamma^*.
    \end{tikzcd}
    \] 

\newpage
\section{Proof of the Basis Lemma}
\label{sec:proof-the-basis-lemma}
In this part of the appendix, we prove the Basis Lemma. We do this using a syntactic analysis of a tree that corresponds to each quantifier-free type. Consider a tree $t \in \Tt$ and a tuple of distinguished nodes in this tree. Define the \emph{skeleton} of this tuple of nodes to be the structure that arises by restricting the original tree to the distinguished nodes and their ancestors.  The skeleton inherits the distinguished nodes from the original tree, and it inherits the relations from the original tree. The isomorphism type of the skeleton is the same thing as the quantifier-free theory of the distinguished nodes.  It is important that in the skeleton, the relations for successor sibling, leftmost sibling and rightmost sibling are inherited from the original tree. For example, if a node $x$ is in the skeleton, but all of its siblings to the left in the original tree are not in the skeleton, then $x$ will  not be selected by the unary relation ``leftmost sibling'' in the skeleton, despite not having left siblings in the skeleton. Similarly, there might be two nodes that are successor siblings in the skeleton, but which are not connected by the ``successor sibling'' relation, because the separating nodes were deleted when going to the skeleton.  

\begin{myexample}\label{ex:skeleton-continued}
    Here is a picture of a skeleton
    \mypic{35}
    In the picture above, the ellipses $\cdots$ represent deleted nodes, which describes to the relations ``successor sibling'', ``leftmost sibling'' and ``rightmost sibling'' in the skeleton.  For example the nodes $x_5$ and $x_6$  in the skeleton are not selected by the ``successor sibling'' relation, even though they are not separated in the skeleton by any other node in the sibling order. 
\end{myexample}

Let $\phi(x_1,\ldots,x_k)$ be some quantifier-free type, as in the assumption of the Basis Lemma. This quantifier-free type is the same thing as a skeleton (modulo isomorphism of skeletons). We will prove  the Basis Lemma by a syntactic analysis of the skeleton, similar to the analysis in Example~\ref{ex:dependencies-in-skeleton}.

\begin{definition}[Dependency graph]\label{def:dependency-graph}
    For a skeleton, its \emph{dependency graph} is the directed graph where the vertices are  nodes of the skeleton, and there is a directed edge $x \to y$ if any of the following conditions hold:
    \begin{enumerate}
        \item $y$ is the parent of $x$; or
        \item  $y$ is a child of $x$  selected by ``leftmost sibling''; or
        \item $y$ is a child of $x$  selected by ``rightmost sibling''; or 
        \item $x$ and $y$ are selected by  ``successor sibling''.
    \end{enumerate}
\end{definition}

Note that the vertices are all nodes of the skeleton, which includes the distinguished nodes (corresponding to the free variables in a quantifier-free type), and their ancestors. Also, the relations ``leftmost sibling'', ``rightmost sibling'' and ``consecutive sibling'' in the above definition are  inherited from the original  tree, and need not describe the relationship between nodes that are in the skeleton, as discussed in Example~\ref{ex:skeleton-continued}. 

\begin{myexample}
    Here is the dependency graph for the skeleton from Example~\ref{ex:skeleton-continued}.
    \mypic{36}
    In the dependency graph, we will be mainly interested in the \emph{minimal \scc's}, which are strongly connected components that cannot be reached from any other strongly connected components. 
    Note that every minimal \scc contains at least one variable (i.e.~at least one distinguished node). This is because every node in the skeleton is either a distinguished node, or an ancestor of some distinguished node. 
\end{myexample}

As remarked in the above example, every minimal \scc in the dependency graph contains at least one variable. For each minimal \scc choose exactly one variable, yielding a subset of the variables  
\begin{align*}
    X \subseteq \set{x_1,\ldots,x_k}.
    \end{align*}
We will prove that this subset satisfies the two conditions in the Basis Lemma.

Consider first Condition~\ref{basis:spans}, which says that the variables from $X$ span all the other variables. 
Each edge in the dependency graph describes a functional dependency. Therefore, we can see that if there is a path in the deependency graph from some variable $x_i$ to some variable $x_j$, then for every every tree $t \in \Tt$, if two $k$-tuples selected by $\varphi$ agree on variable $x_i$, then these tuples must also agree on $x_j$. Since every variable admits a path from some variable in $X$, because $X$ represents all minimal \scc's, it follows that the variables from $X$ determine the other variables, are required by Condition~\ref{basis:spans}.

It remains to prove Condition~\ref{basis:spans}. Let $k$ be the size of the basis $X$. We need to show that there is a sequence of trees
\begin{align*}
t_1,t_2,\ldots \in \Tt
\end{align*}
such that the  tree $t_n$  has size $\Oo(n)$, while the number of tuples selected by the quantifier-free type $\varphi(x_1,\ldots,x_k)$ is at least $n^k$. 
The tree $t_n$  is constructed as follows. For every minimal \scc in the dependency graph, create $n$ copies which are attached to the same parent, as explained in Figure~\ref{fig:copy-skeleton}. Next, for every ellipsis in the resulting skeleton, insert some node subject to the constraints on labels in the tree grammar. The resulting tree is $t_n$. It is easy to see that its size is linear in $n$, since we copy $n$ times a constant number of patterns of constant size. By construction, for each of the $k$ minimal \scc's in the dependency graph, we can independently assign the corresponding variables to at least $n$ possible parts in $t_n$, which gives growth that is at least $n^k$. This completes the proof of Condition~\ref{basis:spans}, and therefore also of the Basis Lemma. 

\begin{figure}
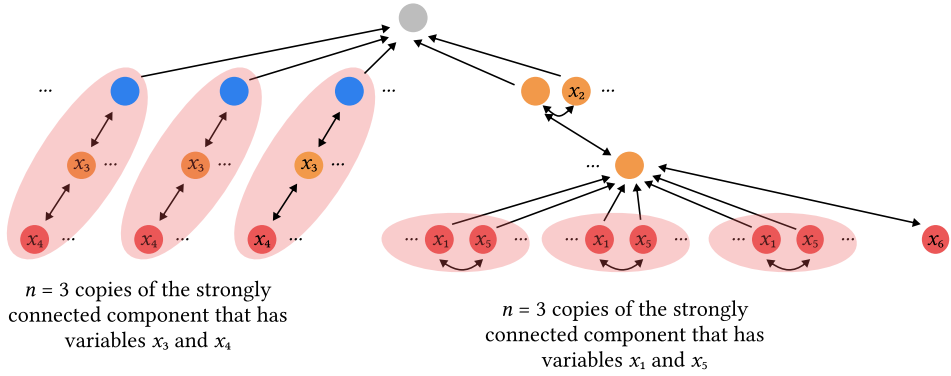

    \mypic{38}
    \caption{\label{fig:copy-skeleton} Taking $n$ copies of every minimal \scc in a skeleton}
\end{figure}

\clearpage
\section{Equivalent models of pebble transducer}
\label{sec:equivalent-pebble-transducers}

In this part of the appendix, we prove Lemma~\ref{lem:equivalent-pebbles}, which says that for every number of pebbles $k \in \set{1,2,\ldots}$, the model from Definition~\ref{def:pebble-transducer} computes the same string-to-string functions as the model  defined in~\cite[Section 1]{engelfriet2002two}.  
For the purpose of this section, we use the name \emph{\mso pebble transducer} for the model defined in Definition~\ref{def:pebble-transducer}, and the name \emph{classical pebble transducer} for the model from~\cite{engelfriet2002two}.  The former model is the one that is used in this paper, in particular the lower bounds are proved for it.

\paragraph*{Definition of the classical model.}
We begin by defining the classical model. The following definition is easily seen to be equivalent to the one from~\cite[Section 1]{engelfriet2002two}, with the main difference being our way of counting pebbles: since we count the head as a pebble and~\cite{engelfriet2002two} does not, see Footnote~\ref{footnote:headcount}, what  we call a $k$ pebble transducer here is  called $k-1$ pebble transducer in~\cite{engelfriet2002two}. The other difference is that~\cite{engelfriet2002two} uses endmarkers $\triangleright$ and $\triangleleft$ to delimit the input string, while the definition below uses tests that tell us if the head is on an extremal position. 

\begin{definition}[Classical pebble transducer]\label{def:classical-model}
    The syntax of a \emph{classical pebble transducer} is given by
\begin{itemize}
    \item a number of pebbles $k \in \set{1,2,\ldots}$;
    \item a finite set $Q$ of states with a designated initial state;
    \item finite input and output alphabets  $\Sigma$ and $\Gamma$;
    \item a designated output string in $\Gamma^*$ for the empty input;
    \item a transition function 
    \begin{align*}
    \delta : Q \times \myunderbrace{\powerset(\text{tests})}{which tests are satisfied  by \\ the  present pebble stack}  \to Q \times \Gamma^* \times \myunderbrace{\text{actions}}{actions that modify \\ the  pebble stack},
    \end{align*}
    where the  tests and actions are defined as follows:
    \begin{itemize}
        \item Tests. In the following tests, the numbers $i,j$ refer to pebble names in $\set{1,\ldots,k}$:
        \begin{itemize}
            \item is pebble $i$ defined, i.e.~present in the stack?
            \item do pebbles  $i,j$ point to the same input position?
            \item does pebble $i$ point to the leftmost input position?
            \item does pebble $i$ point to the rightmost input position?
            \item does pebble $i$ point to a position with label $a \in \Sigma$?
        \end{itemize}
        \item Actions. 
        \begin{itemize}
            \item stop;
            \item move the head one step to the left;
            \item move the head one step to the right;
            \item pop the topmost pebble from the stack;
            \item push a new pebble to the stack, pointing to the leftmost position. 
        \end{itemize}
    \end{itemize}
\end{itemize}
\end{definition}

The semantics of the transducer is a partial function of type $\Sigma^* \to \Gamma^*$ that is defined as follows. If the input string is empty, then the output is the designated output string given in the syntax. Otherwise, the transducer begins in a configuration that consists of the initial state and a pebble stack that has height one, with the unique pebble (the head) pointing to the lefmost input position. Next, it starts to update the configuration, according to the transition function, with each transition appending some string to the output. The actions in the transition function might fail: the head might be moved outside the input string, the transducer might try to push a pebble when the stack has maximal height $k$, or it might try to pop a pebble when the stack has minimal height $1$. If an action fails, then the ouptut string is undefined. The run might  enter an infinite loop, in which case the output string is also undefined.  This completes the semantics of the classical model. 

We now show that the classical model described above defines the same total functions as the \mso model from Definition~\ref{def:pebble-transducer}. The inclusion 
\begin{align*}
\text{classical model} \subseteq \text{\mso model}
\end{align*}
is standard, and proved as in~\cite[Lemma 2.3]{bojanczykPolyregularFunctions2018}. We concentrate on the opposite inclusion 
\begin{align}\label{eq:opposite-inclusion}
    \text{classical model} \supseteq \text{\mso model}.
    \end{align}
This inclusion was proved in the case of $k=1$ in \cite[Lemma 6]{engelfrietMSODefinableString2001}, and we explain below how the case of $k>1$ reduces to the case of $k=1$. Before presenting the reduction, we remark that it is not really important in the scope of this paper: our main contribution is lower bounds which work for the \mso model, and therefore the same lower bounds will clearly work for the classical model.

\begin{proof}[Proof sketch for~\eqref{eq:opposite-inclusion}]
   We pass through an intermediate model, in which \mso transitions are only allowed for configurations of maximal height $k$. Define the \emph{intermediate model} to be the model where for configurations of maximal height $k$, the next configuration is determined using \mso as in Definition~\ref{def:pebble-transducer}, and for configurations of non-maximal height $<k$, the  next configuration is determined as in the classical model, i.e.~based on the tests given in Definition~\ref{def:classical-model}. We will prove two inclusions:
   \begin{align*}
    \text{classical model} \supseteq  \text{intermediate model} \supseteq  \text{\mso model}.
    \end{align*}
    
    The second inclusion is proved using compositionality of \mso in a standard way. The idea is that if a configuration has non-maximal height, then the extra pebble can be used to compute appropriate \mso theories, and thus compute the next configuration. 
    
    We now consider the first  inclusion, i.e.~the intermediate model is contained in the classical model. Consider a pebble transducer as in the intermediate model. For a configuration  of almost maximal height $k-1$, consider the subcomputation that is strictly between this configuration and the nearest configuration of height $<k$. In this subcomputation,  which may be empty, the first $k-1$ pebbles are fixed, and the only pebble that is moved is the maximal pebble $k$. Therefore, this subcomputation can be seen as a computation of a one pebble transducer, in which the input string is additionally marked by the fixed positions of the first $k-1$ pebbles.  Using  
    the result from~\cite[Lemma 6]{engelfrietMSODefinableString2001}, this subcomputation can be simulated by a pebble transducer in the  classical model, without \mso transitions. Substituting this transducer for the subcomputation, we get the desired pebble transducer that does not use \mso transitions at all.
\end{proof}

\newpage
\section{Proof of the Deatomization Theorem}
This section is devoted to proving the hard implication of the Deatomization Theorem.  This implication says that if a function 
\begin{align*}
f : (\Sigma + \atoms)^* \to 
(\Gamma + \atoms)^*
\end{align*}
 is atom-oblivious, and its deatomization is computed by a $k$-pebble transducer without atoms, then the function is computed by a $k$-pebble transducer with atoms.

The proof uses a detailed analysis of how a pebble transducer can output an atom block $\langle a^n \rangle$. In the proof, we use a slightly stronger model of pebble transducer without atoms, in which each configuration is associated to a possibly empty string over the finite output alphabet. Since the model is stronger, the result is stronger: we show that even the stronger model can be de-atomized. The stronger model will be convenient in the proof below, where we gradually improve a transducer so that it satisfies more and more properties.

In this proof, we define a \emph{run} of a pebble transducer to be a sequence of configurations over the same input string, which form an interval in the order of configurations, i.e.~these are all configurations between the first and last one in the sequence. 
 Define an \emph{atom run} to be a run which produces such an atom block, i.e.~an atom run is one which outputs an atom block, with the first configuration producing the opening bracket and the last configuration producing the matching closing bracket. We  prove the Deatomization Theorem in three steps. In Section~\ref{sec:to-honest-first}, we show that a pebble transducer can be improved so that in every atom run, only the head (and not any other pebble below the head) is moved, and furthermore, the head visits only a constant number of atom blocks in the input string. This will be proved using the quantifier elimination techniques from Section~\ref{sec:quantifier-elimination}.  Next, in Section~\ref{sec:to-honest-second}, we further improve the transducer so that the head visits only one atom block in the input. This will be proved using an analysis of certain affine functions that appear implicitly in a pebble transducer. Finally,  in Section~\ref{sec:final-proof-of-deatomization}, we use the improved transducer from the first two steps to complete the proof of the Deatomization Theorem.

\subsubsection{First step: a normal form}
\label{sec:to-honest-first}

In the first step of the proof of the Deatomization Theorem, we show that one can transform every pebble transducer for the deatomization into a certain normal form. In the normal form, all atom runs use configurations of maximal stack height $k$, in particular, an atom run does not use any push/pop operations on the pebble stack and can only modify the pebble stack by moving the head. Furthermore, when producing output, the head will only visit a constant number of atom blocks in the input.  Recall that the \emph{unit letter} is the letter $a$ used for the content of atom blocks $\langle a^n \rangle$. In the following lemma,  an \emph{opening configuration} is any configuration that is the first configuration  in some atom run. We assume without loss of generality that each opening configuration outputs exactly one opening bracket, and therefore for every input string, the opening configurations are in bijective correspondence with the atom blocks in the output string.

\begin{lemma} \label{lemma:semi-honest}
 If the  deatomization $\red f$ is computed by a $k$-pebble transducer, then it is also computed by a $k$-pebble transducer such that for some constant $d \in \set{1,2,\ldots}$, every atom run satisfies all of the following conditions:
 \begin{enumerate}
 \item \label{it:have-same-height} All configurations in the atom run have  stack height $k$.
\item \label{it:clean} In the opening configuration, none of the pebbles is over a unit position.
 \item \label{it:head-over-unit} In the remaining configurations, except the closing configuration, the head is over a unit position. 
 \item \label{it:few-atom-blocks} The unit positions visited by the head are located in at most $d$ atom blocks.
 \end{enumerate}
\end{lemma}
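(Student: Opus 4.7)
The plan is to apply Theorem~\ref{thm:quantifier-elimination} to the deatomization $\red f$, yielding a decomposition $\red f = g \circ h$ where $h$ is a linear \mso interpretation into a tree grammar $\Tt$ and $g : \Tt \to (\Gamma + \langle a^* \rangle)^*$ is quantifier-free. I would then combine this decomposition with the Basis Lemma to re-engineer the given $k$-pebble transducer into a normalized one.

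First, I would analyze the \emph{unit-$a$ components} of $g$, namely those whose output letter is the unit letter $a$ and whose outputs therefore sit inside output atom blocks. After refining components so that each universe formula is a quantifier-free type (as in the proof of Lemma~\ref{lem:optimization-for-quantifier-free}), the Basis Lemma yields a basis $X$ for each such component. The crucial input is that, because $f$ is atom-oblivious, every output atom block $\langle a^m \rangle$ of $\red f$ has length $m$ equal to the length of some specific input atom block: $f$ can only shuffle and copy input atoms, so the $m$ output ``copies'' of an atom must correspond to a single input atom of matching size. Propagating this rigidity through the decomposition, I would argue that each unit-$a$ component has a basis of size exactly one, with the basis variable confined to unit positions of a bounded number of factorization-tree neighbourhoods; by linearity of $h$, these correspond to a bounded number of input atom blocks.

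Next, I would construct the normalized transducer. Its state space is enlarged so that, at each point, the transducer knows whether it is about to open an atom block, scan one, or close it. Just before emitting $\langle$, it pushes pebbles onto carefully chosen non-unit positions (for example, positions holding brackets or letters from $\Sigma$) until the stack reaches height $k$, enforcing Conditions~\ref{it:have-same-height} and~\ref{it:clean}. During the atom run it keeps the bottom $k-1$ pebbles fixed and moves only the head, scanning through the $d$ input atom blocks identified in the structural analysis and emitting one $a$ per unit position visited (Conditions~\ref{it:head-over-unit} and~\ref{it:few-atom-blocks}). Finally, it repositions the head and emits $\rangle$, restoring the stack. Since only the head moves during the scan, $k$ pebbles suffice.

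The main obstacle is the structural analysis, specifically the claim that each unit-$a$ component has a size-one basis supported on a bounded number of input atom blocks. The Basis Lemma alone does not give this; it must be combined with the global rigidity forced by the atom-oblivious hypothesis on $f$, and one must rule out unit-$a$ components whose bases scale with two or more independent parameters (which would force an output atom block whose length grows faster than any single input atom block, contradicting atom-obliviousness). Once this rigidity is established, the re-engineering step is a careful but routine bookkeeping exercise with no increase in the pebble count.
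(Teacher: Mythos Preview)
Your high-level idea of combining Theorem~\ref{thm:quantifier-elimination} with the Basis Lemma is correct in spirit, but the analysis is at the wrong granularity and this creates a real gap.

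You claim that each unit-$a$ component of $g$ has a basis of size exactly one. This is false. A single component contributes unit letters to many different output atom blocks, and their total count can be super-linear in the input. Already for the deatomization of atom squaring, an input of $n$ blocks each of length $n$ (size $\Theta(n^2)$) produces $\Theta(n^3)$ unit letters in the output, so some unit-$a$ component must have basis size at least two. The rigidity you invoke, that each \emph{individual} output atom block equals some input atom block, constrains each atom run separately; it says nothing about the components of $g$ globally.

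The paper repairs this by doing the basis analysis \emph{relative to a fixed opening configuration}. It keeps the original $k$-pebble transducer and, for each pair of states $p,q$, studies the \mso formula $\varphi(\bar x,\bar y)$ expressing that $p(\bar x)$ opens an atom run and $q(\bar y)$ is a later configuration in that same run outputting a unit letter. After quantifier elimination this splits into quantifier-free types, and in each type the minimal \scc's of the skeleton contain at most one variable outside $\bar x$: otherwise a \emph{single} atom run could be pumped to output a block whose length is at least quadratic, hence larger than every input block, contradicting that $\red f$ is a deatomization. The correct conclusion is that $\bar y$ is spanned by $\bar x$ together with at most one extra variable, and that extra variable ranges over a single input atom block. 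Your size-one claim is the degenerate case $\bar x=\emptyset$, which is both false and not what is needed.

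Two further ingredients are missing from your sketch. First, before any of this, the paper uses an \mso-definable swap to make the opening configuration have \emph{minimal} stack height within its atom run; this is what guarantees the lower pebbles stay put and lets one recover the original head after the scanning loop. Your ``routine bookkeeping'' hides exactly this difficulty, which is delicate when the opening stack already has height $k$. Second, condition~\ref{it:clean} needs its own pumping argument (Claim~\ref{claim:radius} in the paper): if a pebble of some opening configuration could sit arbitrarily deep inside a unit block, pumping that block would change the number of opening configurations and hence the number of output atom blocks, which is impossible for a deatomization.
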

\begin{proof}
Consider a $k$-pebble transducer $\red f$ that computes the  deatomization. 
We begin by improving the transducer so that it satisfies a weakening of condition~\ref{it:have-same-height}: for every atom run, the  stack height of the opening configuration is minimal among the stack heights of the other configurations used in the same atom run. In other words, the topmost pebble from the opening configuration is not popped during the atom run. Later, we will ensure that the atom run also does not push pebbles, but this will require more work. 
 \begin{claim}\label{claim:opening-minimal}
 We can assume without loss of generality that in the $k$-pebble transducer, if the opening configuration in an atom run has stack height $\ell$, then all other configurations in this atom have stack height $\ge \ell$. 
 \end{claim}
 \begin{proof}
 Define the \emph{leading configuration} of an atom run to be the first configuration in the atom run among those that have minimal stack height. In this claim, we want to ensure that the leading configuration is the opening configuration. To see this, consider the set of pairs 
 \begin{align*}
 (\text{opening configuration of $\rho$}, \text{leading configuration of $\rho$}),
 \end{align*}
 where $\rho$ ranges over atom runs.
 This set of pairs is a partial bijection between configurations, which is definable in \mso. Therefore, we can create a new pebble transducer, which uses this bijection to swap the two configurations (and their outputs) in each run.
 \end{proof}

We now  improve the transducer so that it satisfies condition~\ref{it:clean}, i.e.~for every opening configuration all pebbles are over non-unit positions, i.e.~positions whose label is not the unit letter. The main observation is that the  original transducer  must already satisfy a certain weakening of this condition, as stated in the following claim. 
\begin{claim}\label{claim:radius}
There is a radius $r \in \set{0,1,\ldots}$ such that for every opening configuration,  each pebble is at most $r$ positions away from a non-unit position.
\end{claim}
\begin{proof}
Toward a contradiction, suppose that there is no such radius.   For every input string, the number of opening configurations in that string is the number of atom blocks in the  output string. The  set of opening configurations is definable in \mso.  If the claim would fail, then we could find opening configurations where  some pebble that is sufficiently far away from the nearest non-unit letter to apply a pumping argument with respect to the \mso formula defining the set of opening configurations.
By pumping a block of unit letters next to this position, we could create a different input string, in which there would be more opening configurations. Since this pumping would involve only unit letters, we would end up having two strings that differ only by the lengths of their atom blocks, but which have different numbers of atom blocks in the output. This cannot happen for the  deatomization. 
\end{proof}

Using the above claim, we can further improve the pebble transducer so that in every opening configuration, each pebble is over a non-unit position, as required by condition~\ref{it:clean} in the lemma. This is done by storing in the state the distance of each pebble to the nearest non-unit position; the numbers stored are taken from a finite set by the above claim. Using the same proof, we can also ensure a slightly stronger property: in every opening configuration,  the only part of an atom block where the pebbles are allowed is the opening bracket (i.e.~closing brackets are also disallowed).

Having ensured condition~\ref{it:clean} and a weaker version of condition~\ref{it:have-same-height}, we now move to ensuring the full version of~\ref{it:have-same-height}, as well as conditions~\ref{it:head-over-unit} and~\ref{it:few-atom-blocks} in the lemma. This will follow from a detailed analysis of the reachability relation, as presented in the following claim.

\begin{claim}
 \label{claim:spans-in-atom-runs} Let $\red f$ be a $k$-pebble transducer that computes the  deatomization, and let 
 $p$ and $q$ be states. Consider the property 
 \begin{align*}
 \varphi(\myunderbrace{x_1,\ldots,x_{\text{dim}(p)}}{$\bar x$}, \myunderbrace{y_1,\ldots,y_{\text{dim}(q)}}{$\bar y$})
 \end{align*}
 which holds in an input string $\red w$ if $p(\red w, \bar x)$ is the opening configuration in an atom run, and $q(\red w, \bar y)$ is some other configuration in the same atom run which outputs a unit letter. Then $\varphi$ is equivalent to a finite disjunction 
\begin{align*}
 \bigvee_{i \in I} \varphi_i(\bar x, \bar y),
\end{align*}
which is disjoint (at most one of the disjuncts holds for every input string $\red w$ and positions $\bar x \bar y$) and where each disjunct $\varphi_i$ has one of the following properties:
\begin{enumerate}
 \item {\bf Constant.} The variables $\bar x$ span $\varphi_i$ in the same sense as in the Basis Lemma, i.e.~for every input string $\red w$ one cannot find two different tuples that satisfy $\varphi_i$ and agree on the variables from $\bar x$; or 
 \item {\bf Linear.} There is a variable $y$ among $\bar y$ such that $\bar x y$ spans $\varphi_i$ in the sense described above. Furthermore, for every string $\red w$ with distinguished positions $\bar x$, there is a single atom block in $\red w$ which contains all positions $y$ that can be extended to a tuple $\bar y$ satisfying $\varphi_i(\bar x \bar y)$.
\end{enumerate} 
\end{claim}
\begin{proof}[Proof of Claim~\ref{claim:spans-in-atom-runs}]
We use the quantifier elimination result from Theorem~\ref{thm:quantifier-elimination}.    Let $\Delta$ be the input alphabet of the 
pebble transducer $\red f$. This alphabet is the disjoint union of $\Sigma$ with the two brackets and the unit letter used for representing atom blocks. By Theorem~\ref{thm:quantifier-elimination}, there is a tree grammar $\Tt$ and a linear interpretation $h$ such that the diagram
\[
 \begin{tikzcd}
 [column sep=1.5cm]
 \Delta^* 
 \ar[r,"\text{linear }h"]
 \ar[dr,equals, bend right=30]
 & \Tt 
 \ar[d, "\text{yield}"']
 \\
 & 
 \Delta^*,
 \end{tikzcd}
 \] 
 commutes and furthermore
 the \mso formula $\varphi(\bar x, \bar y)$ from the assumption of Claim~\ref{claim:spans-in-atom-runs} is equivalent to a quantifier-free formula $\psi( \bar x, \bar y)$ that works in the tree which is produced by $h$. Decompose the quantifier-free formula into a finite disjunction of quantifier-free types 
\begin{align*}
\bigvee_{i \in I} \psi_i(\bar x, \bar y).
\end{align*}
Recall the skeletons of quantifier-free types that were discussed in the proof of the Basis Lemma, and the corresponding notion of minimal \scc's. 
We will show that for every $i \in I$, the corresponding skeleton and its minimal \scc's satisfy the following condition:
\begin{itemize}
 \item[(*)] There is at most one minimal \scc that contains no variables from $\bar x$. Furthermore, if there is such a minimal \scc, then all leaves in that minimal \scc are labelled by the unit letter.
\end{itemize}

Before proving (*), we show how it implies the claim. The formula $\varphi_i(\bar x, \bar y)$ in the statement of the claim says that, after producing the tree computed by the linear interpretation $h$, the resulting leaves in the tree satisfy the quantifier-free type $\psi_i(\bar x, \bar y)$. Since every variable of $\psi_i$ is spanned by some variable in a minimal \scc, we conclude from (*) that all variables are spanned either by $\bar x$, or by $\bar x$ extended with one variable from $\bar y$. In the latter case, by the ``Furthermore'' part of (*), all values for that variable $y$ come from a single atom block, as required by the ``Furthermore'' part of the claim. 

It remains to prove (*). We use a pumping argument as in the proof of the Basis Lemma. Fix some $i \in I$, and suppose that the skeleton of $\psi_i$ has $c \in \set{0,1,\ldots}$ minimal \scc's that contain no variables from $\bar x$. We first show that the number $c$ of minimal \scc's is at most one, thus proving the first part of (*). Using the same argument for growth rates as in the Basis Lemma, we can use the  minimal \scc's without variables from $\bar x$ to create for each $n$ a tree in the tree grammar that has  $\Oo(n)$ leaves and  such that some tuple $\bar x$ of leaves in this tree can be extended in at least $n^c$ ways by variables $\bar y$ so that the result satisfies $\psi_i(\bar x \bar y)$. If $c \ge 2$, then the output of the corresponding atom block would be at least quadratic, and therefore it would exceed the size of every atom block in the input, leading to an atom block in the output that does not appear in the input. Therefore, $c \le 1$. The ``furthermore'' part of (*) is proved in the same way: if there would be a minimal \scc without variables from $\bar x$ but with a leaf labeled by a non-unit letter, then for each $n$ we could find a tree in the tree grammar with  $\Oo(n)$ leaves where all atom blocks in the corresponding input string have constant length, but the atom block produced by  some atom run has length at least $n$. This completes the proof of (*), and therefore also of Claim~\ref{claim:spans-in-atom-runs}.
\end{proof}

We now use the above claim to finish the proof of the lemma. Let $p$ be some state, of stack height $\ell \in \set{0,\ldots,k}$. By Claim~\ref{claim:opening-minimal}, we know that if an atom run begins with state $p$, then all configurations in this atom run use states of stack height  $\ge \ell$.
Fix $p$ and apply the claim for every state $q$, yielding a disjunction 
\begin{align*}
\bigvee_{i \in I_q} \varphi_i(\bar x, \bar y).
\end{align*}
Assume without loss of generality that all sets $I_q$ are disjoint, and let $I$ be their union.
If we assume that each configuration produces at most one output letter (which is easily assured for a pebble transducer), then for every atom run that begins with state $p$ and pebble stack $\bar x$, the size of the output for this run is 
\begin{align}\label{eq:output-as-sum}
\sum_{i \in I} \text{number of tuples $\bar y$ that satisfy $\varphi_i(\bar x, \bar y$)}.
\end{align}
The new pebble transducer will produce this output as follows. First, it pushes enough pebbles to make the stack have maximal size $k$; the stack will remain at this size throughout the run, thus ensuring condition~\ref{it:have-same-height} in the lemma. Recall that condition~\ref{it:clean} has already been assured using Claim~\ref{claim:radius}. Next, the transducer performs the following actions for every $i \in I$, depending on whether $i$ has constant or linear kind from Claim~\ref{claim:spans-in-atom-runs}:
\begin{itemize}
\item \textbf{Constant.} If $\varphi_i$ is spanned by $\bar x$, then  the number of tuples $\bar y$ that satisfy $\varphi_i(\bar x, \bar y)$ is zero or one, depending on an \mso definable property of $\bar x$. Therefore, the corresponding output can be produced already in the opening configuration of the atom run. 

\item \textbf{Linear.} Suppose now that $\varphi_i$ is spanned by some variable $y$ among $\bar y$. To produce the output corresponding to $\varphi_i$, we need to output one unit letter for each position $y$ that can be extended to some $\bar y$ that satisfies $\varphi_i(\bar x, \bar y)$. The transducer does this by looping through all such positions $y$. All choices for $y$ will use unit positions in a single atom block, thanks to the ``Furthermore...'' condition in Claim~\ref{claim:spans-in-atom-runs}; this will ensure items~\ref{it:head-over-unit} and~\ref{it:few-atom-blocks} in the lemma, with the constant $d$ in item~\ref{it:few-atom-blocks} being the size of $I$.

In order to loop through all positions $y$, the transducer needs to move the head to these positions. This raises the following issue: once the transducer has looped through all positions $y$, how does it recover the original placement of the head that was  used at the beginning of the atom run? If stack size in the original atom run was  non-maximal, i.e.~$\ell < k$, then this is not an issue, because we still have the original stack $\bar x$ on the first $\ell$ positions. However, if the original stack height in the first configuration of the atom run was $\ell = k$, then we seemingly run a risk of forgetting the topmost pebble in the original stack. However, in the case of $\ell = k$ the variable $y$ is necessarily the last variable in $\bar y$, since the first $k-1$ pebbles are not changed throughout an atom run thanks to Claim~\ref{claim:opening-minimal}. Therefore, in this case, the loop is simply running through an \mso definable subset of the original configurations in the atom run, and we can recover the original pebble stack since one can always recover in \mso the most recent configuration that produced an opening bracket.
\end{itemize}
\end{proof}

\subsubsection{Second step: some linear algebra}
\label{sec:to-honest-second}

By condition~\ref{it:few-atom-blocks} in Lemma~\ref{lemma:semi-honest},  there is some bound $d \in \set{0,1,\ldots}$ such that for every atom run, at most $d$ atom blocks from the input string are visited by the head. These atom blocks, i.e.~the atom blocks that are visited by the head, are called the \emph{significant blocks} of the atom run. Here is a picture of an atom run, represented by its first configuration, which has three significant blocks (in the run, $k=5$):
\mypic{24}
Note that in the picture, the positions of pebbles other than pebble $k$ are fixed through the atom run, since condition~\ref{it:have-same-height} says that all configurations in an atom run have maximal stack height $k$. The only pebble that moves during the run is the head, i.e.~pebble $k$; the possible positions for this pebble are depicted using blue arrows in the picture.
As in the picture, we number the significant blocks according to their position in the input string. Define the \emph{profile} of an atom run to be the vector in $\Nat^d$ which describes the lengths of its significant blocks, ordered according to their position in the input string. If the number of significant blocks is smaller than $d$, then this vector is padded with zeros. For example, if $d=5$ and we consider the atom run in the picture above, then its profile is (4, 5, 4, 0, 0), because there are three significant blocks with respective lengths 4, 5, 4, and the fourth and fifth significant block are undefined. 

To prove Lemma~\ref{lem:honest}, we will analyze in detail the relationship between the profile of an atom run and the length $n$ of the atom block $\langle a^n \rangle$ that is produced by the atom run.
The first step of this analysis is Lemma~\ref{lem:affine-partition}, which says that this relationship is described by an affine function that depends only \mso definable properties of the atom run.

\begin{lemma}\label{lem:affine-partition}
 Consider a $k$-pebble transducer obtained by applying Lemma~\ref{lemma:semi-honest}. There is  partition of the opening configurations into finitely many \mso definable parts, and for each part $P$ of this partition there is a function 
 \begin{align*}
 (n_1,\ldots,n_d) \in \Nat^d \quad \mapsto \quad \myunderbrace{\lambda_0 + \lambda_1 n_1 + \cdots + \lambda_d n_d}{$\lambda_0$ is an integer and $\lambda_1,\ldots,\lambda_d$ are\\ non-negative rationals coefficients},
 \end{align*}
 such that for every atom run with opening configuration in part $P$, the length of its output is obtained by applying the function to the profile.
\end{lemma}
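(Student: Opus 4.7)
The plan is to combine Claim~\ref{claim:spans-in-atom-runs} with a standard \mso pumping argument applied to the uniform unit-letter blocks. For an opening configuration $(p,\bar x)$ in state $p$, the output length of its atom run equals the number of subsequent configurations outputting a unit letter, i.e.\ $\sum_q |\set{\bar y : \varphi_{p,q}(\bar x,\bar y)}|$ summed over all states $q$, where $\varphi_{p,q}$ asserts that $(p,\bar x)$ is opening and $(q,\bar y)$ is a later configuration in the same atom run outputting a unit letter. By Claim~\ref{claim:spans-in-atom-runs}, each $\varphi_{p,q}$ is a disjoint disjunction of finitely many formulas of two kinds: constant (count is $0$ or $1$, determined by $\bar x$) and linear (count is the number of positions $y$ in one specific atom block of length $n_k$ such that the unique extension $\bar y$ satisfies the disjunct).

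The constant disjuncts contribute integers determined by \mso properties of $\bar x$, which after refining the partition become fixed constants absorbed into $\lambda_0$. The linear disjuncts are the interesting case. For each such disjunct, with $y$ ranging over a block of $n_k$ unit letters, I would apply \mso compositionality inside the block: since the block is uniform, the \mso type of a position $y$ (relative to $\bar x$ and the rest of the structure) depends only on its distance to each boundary up to a threshold and its position modulo some integer $m$, all bounded uniformly by the formulas involved. Consequently, for sufficiently long blocks, the count of satisfying positions $y$ splits into a constant number of boundary contributions plus a bulk count of the form $\alpha n_k/m$ together with a correction determined by $n_k \bmod m$.

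I would then refine the \mso partition of opening configurations to record, for each significant block, whether its length exceeds the threshold and its residue class $n_k \bmod m$; both are \mso-definable from $\bar x$. Within each refined part, every linear disjunct contributes an exact affine function of $n_k$ with non-negative rational slope $\alpha/m$, and summing all the constant and linear contributions across disjuncts and states yields an affine function of the profile $(n_1,\ldots,n_d)$. The overall slopes are non-negative rationals, and the overall additive constant is integer-valued, since the output length is always a non-negative integer and any residual fractional parts can be absorbed by further refining the partition on modular data.

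The main technical obstacle is coordinating the refinements uniformly. The thresholds, moduli, and boundary behaviors depend on the particular disjunct and block, so a single finite \mso-definable partition of opening configurations must simultaneously encode all of this. Since the decomposition involves only finitely many disjuncts, each with only finitely many relevant parameters, taking a common refinement gives the desired partition; the remaining bookkeeping is routine but must be performed carefully to ensure that the coefficients $\lambda_i$ for $i\geq 1$ are non-negative (they are, as sums of counting contributions) and that $\lambda_0$ lands in $\mathbb{Z}$.
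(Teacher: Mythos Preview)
Your proposal is correct and follows essentially the same approach as the paper: reduce the output length to a sum of counts of head positions inside each significant block, then use periodicity of \mso over unary words to show each such count is affine in the block length after refining the partition by residue data. The paper's route is marginally more direct---it reads the per-block decomposition straight off conditions~\ref{it:head-over-unit} and~\ref{it:few-atom-blocks} of Lemma~\ref{lemma:semi-honest} (only the head moves, and it stays over unit positions in at most $d$ blocks), rather than re-invoking Claim~\ref{claim:spans-in-atom-runs}---but the substantive step, namely the compositionality/periodicity analysis inside a unary block (your threshold-plus-modulus argument, the paper's Claim~\ref{claim:affine-relationship} and Corollary~\ref{cor:lambdas}), is identical.
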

\begin{proof}
By conditions~\ref{it:head-over-unit} and~\ref{it:few-atom-blocks} from Lemma~\ref{lemma:semi-honest}, the length of the output produced by an atom run with $d$ significant blocks is equal to 
 \begin{align}
 \label{eq:sum-of-significant-blocks}
 \sum_{i \in \set{1,\ldots,c}} 
 \begin{tabular}{l}
    \text{number of configurations with the}\\
    \text{head in the $i$-th significant block.}
 \end{tabular}
 \end{align}
 Therefore, to prove the lemma, we will show that the $i$-th summand in the sum above can be expressed using an affine function applied to the length of the $i$-th significant block. This will be done using the following observation about the output sizes of \mso queries the select positions in inside atom blocks.
\begin{claim}\label{claim:affine-relationship}
 For every \mso formula  $\varphi(x)$  there is a partition of the set $a^*$  of all words over a one letter alphabet, such that the partition has finitely many \mso definable parts, and for each  part there are  associated coefficients $\lambda_0,\lambda$, such that for every word $a^n$ in this part, there are exactly $\lambda_0 + \lambda n$ positions selected by $\varphi(x)$.
\end{claim}
\begin{proof}
 By the equivalence of \mso and regular languages, there is some number $k$ such that whether or not $\varphi(x)$ selects a position $i$ in some word $a^n$ depends only on its \emph{type}, which is defined to be the  following four numbers:
 \begin{align*}
 i \shortmod k \qquad n-i \shortmod k \qquad \min(i,k) \qquad \min(n-i,k).
 \end{align*}
 There are finitely many types, and for every $n$, the number of positions with a given type in the word $a^n$ is equal to $\lambda_0 + \lambda n$, with the coefficients $\lambda_0$ and $\lambda$ depending on the type as well as the values of $\min(n,k)$  and $n \shortmod k$. Since the latter values can be described by regular languages, the claim follows.  Note that the coefficient $\lambda$ need not be an integer, e.g.~it is equal to $\frac 1 2$ when the property $\varphi(x)$ says that $x$ is an even-numbered position.
\end{proof}

\begin{corollary}\label{cor:lambdas}
 Let $\varphi(y)$ be an \mso formula, and let $i \in \set{1,\ldots,c}$. There is a  partition of the opening configurations, such that the partition has finitely many \mso definable parts, and for  each part there are associated   coefficients $\lambda, \lambda_0$, such that for every  opening configuration in the part, the number of positions selected by $\varphi(y)$ in the $i$-th significant block of the corresponding atom run is
 \begin{align*}
 \lambda_0 + \lambda\cdot\text{length of $i$-th significant block}.
 \end{align*}
\end{corollary}
\begin{proof}
 If we take an atom run, and we ask about whether or not an \mso formula $\varphi(y)$ holds in some position of the $i$-th significant block of this atom run, then the answer to the question will depend only on \mso definable properties of the following three parts of the opening configuration:
 \mypic{25}
 The only variable present in the middle part is $y$, since the middle part cannot contain any pebbles by condition~\ref{it:clean} of Lemma~\ref{lemma:semi-honest}.
 The appropriate \mso information about the left and right parts can be fixed by the \mso partition of the opening configurations; while to the middle part, we can apply the analysis from Claim~\ref{claim:affine-relationship}.
\end{proof}

The lemma follows  by applying the above corollary to each summand in~\eqref{eq:sum-of-significant-blocks}.
\end{proof}

 In the lemma above, we have shown that the output is determined by applying some affine function to the profile. In principle, the affine function could mix the significant blocks, e.g.~the output length could be the average of the lengths of the first two significant blocks. The following lemma rules out such mixing, by using a more refined analysis of the coefficients used in affine functions.

\begin{lemma}\label{lem:affine-partition-two}
 One can strengthen the conclusion of Lemma~\ref{lem:affine-partition} so that for every part, at most one of the coefficients $\lambda_0,\ldots,\lambda_d$ is nonzero, and furthermore if $\lambda_i$ is nonzero for $i \neq 0$, then $\lambda_i=1$.
\end{lemma}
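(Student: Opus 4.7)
The plan is to exploit the deatomization hypothesis, which Lemma~\ref{lem:affine-partition} did not use. By atom-obliviousness of $f$, each output atom of $f$ is literally some input atom; transporting through the deatomization, every atom block produced by an atom run of $\red f$ must be an exact copy of some input atom block. Fix a part $P$ of the partition from Lemma~\ref{lem:affine-partition} with affine output length $L_P(n_1,\ldots,n_d) = \lambda_0 + \sum_{i=1}^d \lambda_i n_i$. The fact I will exploit is that for every input $\red w$ whose opening configuration lies in $P$, the value $L_P(n_1(\red w),\ldots,n_d(\red w))$ equals the length of \emph{some} atom block occurring in $\red w$.

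The first preparatory step is to refine $P$ so that each coordinate is either freely pumpable or pinned to a single value. Since $P$ is \mso-definable, the set of achievable tuples $(n_1,\ldots,n_d)$ (with the remaining input data held fixed) is \mso-definable over $\{a\}^*$, hence semilinear. I refine the partition so that each resulting sub-part has one of two shapes per coordinate $i$: either $n_i$ is restricted to a single value (in which case I absorb $\lambda_i n_i$ into $\lambda_0$ and set $\lambda_i=0$), or $n_i$ ranges over an infinite arithmetic progression independently of the other data. After this refinement, every surviving nonzero $\lambda_i$ with $i \ge 1$ corresponds to a coordinate that can be pumped without leaving its part.

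Now suppose $\lambda_i > 0$ for some $i \ge 1$ on a refined part $P$. Fix a witness input and pump $n_i$ along its progression while leaving all other data unchanged; in particular, the other $n_j$ and all non-significant block lengths remain bounded. The output length $\lambda_0 + \lambda_i n_i + C$, where $C$ denotes the fixed contribution from the remaining coordinates, tends to infinity. For all sufficiently large $n_i$ in the progression it therefore exceeds every candidate input block length except $n_i$ itself, so the output length must equal $n_i$ cofinitely often along the progression. Matching the affine identity $\lambda_0 + \lambda_i n_i + C = n_i$ forces $\lambda_i = 1$ and $\lambda_0 + C = 0$. The latter equality must hold for every choice of the other coordinates allowed by $P$; pumping each other pumpable $n_j$ along its own progression then forces $\lambda_j = 0$, and hence $C = 0$ and $\lambda_0 = 0$. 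Together with the remaining case in which every $\lambda_i$ with $i \ge 1$ vanishes (so $L_P$ is already the constant $\lambda_0$), this gives the trichotomy claimed by the lemma.

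The main obstacle I foresee is making the refinement in the second paragraph genuinely rigorous. An \mso formula defining $P$ can a priori correlate the lengths of different atom blocks, so I will need to appeal to compositionality of \mso on $\{a\}^*$, together with a semilinear decomposition, to guarantee that after refinement the constraint on each coordinate $n_i$ becomes independent of the others and is ultimately periodic. The remaining steps are routine arithmetic: the claim that a growing affine function can only be matched by its dominant coordinate rests on the elementary fact that once everything except $n_i$ is fixed, there are only finitely many candidate lengths to match, all of them bounded.
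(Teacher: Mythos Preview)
Your proposal is correct and follows essentially the same route as the paper: refine the partition so that the set of achievable profiles in each part becomes a product $\Pi_1 \times \cdots \times \Pi_d$ of arithmetic progressions (the paper proves exactly this claim via compositionality, which is the obstacle you correctly identify), absorb the finite coordinates into $\lambda_0$, and then pump the remaining coordinates against the constraint that the output length must match some input atom block. The only tactical difference is that the paper first rules out two simultaneously nonzero $\lambda_i$'s by evaluating the affine form on the vector $(\alpha_1+\beta_1 n,\ \alpha_2+\beta_2 n^2,\ \ldots,\ \alpha_d+\beta_d n^d)$ and separately bounds the non-significant blocks by a pumping claim, whereas you fix a single witness and pump one coordinate at a time, deriving $\lambda_i=1$, $\lambda_0+C=0$, and then $\lambda_j=0$ for $j\neq i$ by varying the other coordinates; your version is slightly more elementary and avoids the polynomial-growth trick, but the two arguments are interchangeable.
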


\begin{proof}
    We begin the proof with an analysis of the possible profiles that can arise in an \mso definable set of atom runs. We show that sets of profiles arising this way can be assumed to be products of arithmetic progressions. Here,  an \emph{arithmetic progression} is  a subset of the natural numbers that is of the form $\alpha + \beta \Nat$ for some $\alpha,\beta \in\Nat$. Examples of arithmetic progressions include the odd numbers, or the singleton set $\set{7}$. A non-example is the set of numbers not divisible by three, although this set is a union of two arithmetic progressions.

 \begin{claim}
 One can refine the partition from Lemma~\ref{lem:affine-partition} so that for every part, the corresponding set of profiles is of the form 
 \begin{align*}
 \myunderbrace{\Pi_1 \times \cdots \times \Pi_d}{each $\Pi_i$ is an arithmetic progression}.
 \end{align*}
 \end{claim}
 \begin{proof}
If we view a regular (equivalently, \mso definable) language over a unary alphabet as a set of natural numbers, then this set of numbers will be ultimately periodic, i.e.~it will be a finite union of arithmetic progressions.
 By a compositionality argument similar to the one used in the proof of Corollary~\ref{cor:lambdas}, whether or not the first configuration of an atom run satisfies some \mso  formula depends only on the \mso definable properties of  the significant blocks, and the at most $d+1$ parts of the configuration that are separated by the significant blocks. Which formulas are true in the significant blocks depends only on their lengths, in an ultimately periodic way. By distributing union over products, we get a decomposition as in the statement of the claim. 
 \end{proof}

 From now on, we assume that the partition from Lemma~\ref{lem:affine-partition} has been refined according to the above claim. Take some  part $P$ of this partition, with the  corresponding set of profiles being
 \begin{align*}
 \Pi = \Pi_1 \times \cdots \times \Pi_d \subseteq \Nat^d.
 \end{align*}
From Lemma~\ref{lem:affine-partition}, we know that for every atom run with its opening configuration in $P$, the length of its output is obtained by applying to its profile some affine function 
\begin{align*}
f(n_1,\ldots,n_d) = \lambda_0 + \lambda_1 n_1 + \cdots + \lambda_d n_d
\end{align*}
which depends only on the part.
Without loss of generality, we can assume that: (*) if the arithmetic progression $\Pi_i$ is finite (i.e.~it describes a singleton set), then the coefficient $\lambda_i$ is zero. This is because the contribution of the $i$-th coordinate can be put into the constant $\lambda_0$ when the $i$-th coordinate is known to be fixed. 
To prove the lemma, we will show that, assuming (*), at most one of the coordinates $\lambda_0,\ldots,\lambda_d$ can be nonzero, and if the nonzero coordinate is not $\lambda_0$, then it must be equal to 1.

We first show that at most one of the coordinates $\lambda_1,\ldots,\lambda_d$ can be nonzero. This will follow from two observations. The first observation, see Claim~\ref{claim:affine-two-combine}, is that if two of the coefficients would be nonzero, then we could use them to combine coordinates in a non-trivial way. The second observation, see Claim~\ref{claim:pumping}, will show that the non-trivial combinations are forbidden.

\begin{claim}\label{claim:affine-two-combine}
 If at least two of the coordinates $\lambda_1,\ldots,\lambda_d$ are nonzero, then the following set is infinite
 \begin{align*}
 \setbuild{ f(p)}{$p \in \Pi$ and $f(p)$ is not equal to any coordinate in $p$}.
 \end{align*} 
\end{claim}
\begin{proof}
    Suppose that the arithmetic progression $\Pi_i$ is $\alpha_i + \beta_i n$.  If we take any affine function which has at least two nonzero coefficients $\lambda_1,\ldots,\lambda_d$, then for sufficiently large $n$ applying the function to  the vector
    \begin{align*}
     (\alpha_1 + \beta_1 n, \alpha_2 + \beta_2 n^2, \ldots, \alpha_d + \beta_d n^d) \in \Pi
    \end{align*}
 will yield an output that does not appear in any of the coordinates of the vector. This is because smaller coordinates in the vector will be to small to cancel out the contributions of the larger coordinates.
\end{proof}

The second observation is that every profile can be seen as arising from some atom run where all non-significant blocks have bounded length. 

\begin{claim}\label{claim:pumping}
 There is a constant $n_0$, such that every atom run in $P$ has the same profile as some atom run in $P$ where all non-significant blocks have length at most $n_0$.
\end{claim}
\begin{proof}
 A pumping argument.
\end{proof}

Using the above two claims, we conclude that at most one of the coordinates $\lambda_1,\ldots,\lambda_d$ can be nonzero. Indeed, otherwise, by Claim~\ref{claim:affine-two-combine} we could find some profile $p \in \Pi$ such that $f(p)$ is bigger than the constant $n_0$ from Claim~\ref{claim:pumping}; this would lead to an atom run whose output atom block is not equal to any atom block from the input, which cannot happen for the de-atomization. 

So far, we have proved that at most one of the coefficients $\lambda_1,\ldots,\lambda_d$ is nonzero. To finish the proof of the lemma, we will show that if $\lambda_i$ is nonzero for  $i \in \set{1,\ldots,d}$, then $\lambda_i=1$ and $\lambda_0=0$. To see this, consider some profile in $\Pi$ where the $i$-th coordinate is much bigger than all the other coordinates. Such a profile exists, since the $i$-th arithmetic progression $\Pi_i$ is infinite by assumption (*). By Claim~\ref{claim:pumping}, this profile arises from an atom run where the only large atom block in the input is the $i$-th significant block; since $\lambda_i$ is nonzero it follows that the output block is too large to be equal to anything but the $i$-th significant block, and therefore $\lambda_i=0$ and $\lambda_0=0$. 
\end{proof}

\subsubsection{Third step: putting it all together}
\label{sec:final-proof-of-deatomization}
In this section, we complete the proof of the Deatomization Theorem. We begin with the following lemma, which combines the results of the analysis from  Sections~\ref{sec:to-honest-first} and~\ref{sec:to-honest-second}.

\begin{lemma}\label{lem:honest}
    If the deatomization $\red f$ is computed by some $k$-pebble transducer, then it is computed by one which has the following properties: 
    \begin{enumerate}
        \item \label{it:super-clean} in every opening configuration, none of the pebbles is over a unit position, or over a closing bracket of an atom block;
        \item \label{it:big-or-small} there is a constant $n_0 \in \set{0,1,\ldots}$ such that for every atom run, either:
       \begin{enumerate}
        \item the output of the atom run has length at most $n_0$; or 
        \item the head in the opening configuration is over the opening bracket of some input atom block $\langle a^n \rangle$, and the output of the atom run is equal to $\langle a^n \rangle$.
       \end{enumerate}
        \end{enumerate}
\end{lemma}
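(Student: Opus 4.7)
My plan is to combine Lemmas~\ref{lemma:semi-honest} and~\ref{lem:affine-partition-two}, and then perform one final rewiring of the atom runs so that the output is produced in the ``canonical'' way demanded by item~\ref{it:big-or-small}.

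Start with a $k$-pebble transducer obtained from Lemma~\ref{lemma:semi-honest}. Item~\ref{it:super-clean} of Lemma~\ref{lem:honest} follows from condition~\ref{it:clean} of Lemma~\ref{lemma:semi-honest}, together with the strengthening noted just after Claim~\ref{claim:radius} in its proof: the same pumping argument that forbids pebbles over unit positions in opening configurations forbids pebbles over closing brackets as well (otherwise pumping the unit letters of the surrounding atom block would change the number of output atom blocks).

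For item~\ref{it:big-or-small}, apply Lemma~\ref{lem:affine-partition-two} to obtain a finite \mso-definable partition of the opening configurations into parts, each with an associated affine function. For each part there are two cases: either only $\lambda_0$ is nonzero, in which case every atom run in that part outputs a fixed string of length $\lambda_0$; or exactly one coefficient $\lambda_i$ with $i \in \set{1,\ldots,d}$ equals $1$ and all others vanish, in which case the atom run outputs $\langle a^n \rangle$ where $n$ is the length of the $i$-th significant block of the input. Let $n_0$ be the maximum constant $\lambda_0$ across all constant-type parts; atom runs in those parts automatically meet the first disjunct of item~\ref{it:big-or-small}. For linear-type parts only the \emph{placement} of the head in the opening configuration might fail to match the second disjunct, because the current transducer is free to place the head anywhere. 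What needs to change is the geometry of the run, not its output.

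The rewiring goes as follows. For each opening configuration $c$ belonging to a linear-type part with index $i$, the $i$-th significant block of the corresponding atom run is \mso definable from $c$ (the partition is \mso definable and the significant blocks are defined from the pebble stack and the head's trajectory, which is itself determined by $c$ once we are inside the atom run). By condition~\ref{it:have-same-height} of Lemma~\ref{lemma:semi-honest}, the first $k-1$ pebbles are fixed throughout the atom run, so only the head moves. Build a new transducer that agrees with the previous one outside atom runs, but replaces each linear-type atom run by the canonical sequence of configurations: the head sits at the opening bracket of the $i$-th significant block (producing $\langle$), then moves one position right at each step across the unit letters of that block (each producing $a$), and finally stops at the closing bracket (producing $\rangle$). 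Constant-type atom runs are rewritten analogously by a bounded sequence of $n_0$ configurations with the head kept on the existing head position. The successor relation on configurations remains \mso definable, the pebble count stays $\le k$, and stack discipline is preserved because the stack is untouched during the replacement. Since the output emitted by each atom run is unchanged, the overall function is preserved. The main book-keeping obstacle is matching the boundaries: the last configuration of the rewritten atom run must feed into whatever configuration followed the original atom run, but since the opening and closing configurations of every atom run are \mso identifiable and the stack they expose is unchanged, this stitching is straightforward rather than conceptually difficult.
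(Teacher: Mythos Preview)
Your proposal is correct and follows essentially the same route as the paper: apply Lemma~\ref{lemma:semi-honest} to obtain item~\ref{it:super-clean}, invoke Lemma~\ref{lem:affine-partition-two} to classify atom runs as constant-type or linear-type, and then rewire the linear-type runs so that the opening configuration sits on the opening bracket of the relevant significant block.

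The one genuine difference is in the scope of the rewiring. You replace each linear-type atom run wholesale by a canonical left-to-right traversal of the $i$-th significant block. The paper is more parsimonious: it keeps the existing atom run intact and merely swaps the opening configuration with the first configuration of the run whose head lies in the $i$-th significant block (the same \mso-definable-bijection trick used in Claim~\ref{claim:opening-minimal}), and then, if that configuration is over a unit position rather than the opening bracket, prepends one extra configuration at the bracket. Your version is cleaner to state but requires introducing fresh states and rebuilding the order formula on a larger set of configurations; the paper's version touches at most two configurations per atom run and so avoids that bookkeeping. Both arguments rely on exactly the same facts (the partition is \mso definable, the first $k-1$ pebbles are fixed throughout the atom run, and the $i$-th significant block is \mso recoverable from the opening configuration), so the stitching and stack-discipline checks you sketch go through in either version. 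Note also that your rewriting of constant-type runs is unnecessary: they already satisfy the first disjunct of item~\ref{it:big-or-small} as-is.
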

\begin{proof}
    Apply Lemma~\ref{lemma:semi-honest} to the pebble transducer from the assumption, and fix the resulting pebble transducer for the rest of this proof. As we have remarked after the proof of Claim~\ref{claim:radius}, the  transducer  already satisfies condition~\ref{it:super-clean} from the present lemma. We will now improve it so that it also satisfies condition~\ref{it:big-or-small}. 

    Consider the \mso definable partition of opening configurations from Lemmas~\ref{lem:affine-partition} and~\ref{lem:affine-partition}. In this partition, every part has one of two properties: either (a) all atom runs corresponding to this part have the same output; or (b) there is some $i \in \set{1,\ldots,d}$ such that all atom runs corresponding to this part copy the $i$-th significant block from the input to the output. The outputs of atom runs of kind (a) have bounded length, since there are finitely many parts; let $n_0$ be the maximal length that arises this way. To finish the proof of the lemma, we will modify the transducer so that for each atom run of kind (b), the first configuration has its head over the $i$-th significant block. For an atom run of kind (b), consider the first configuration in this atom run with the head in the $i$-th significant block. We can use the same  swapping argument as in the proof of Claim~\ref{claim:radius} to ensure that this is the opening configuration. Finally, if the opening configuration does not have its head over the opening bracket, and instead uses some unit position inside the atom block, then we can precede it by a configuration just before which does have its head over the opening bracket. 
\end{proof}

The pebble transducer in the above lemma is almost the same as a pebble transducer with atoms, if we ignore the contents of atom runs and simply think of them as outputting atoms in a single step. The only extra tricks that our deatomized pebble transducer can play, and which would not be available to pebble transducer with atoms, are: (a) outputting atom blocks of constant size without having the head over such atom blocks in the input string; and (b) testing regular properties of strings that represent atoms, such as ``even length''. In the final part of the proof of the deatomization Theorem, we show that such tricks are useless if the atom representation is chosen so that: (a) short atom blocks do not appear in the input; and (b) all atom blocks have the same regular properties.

    Consider a  $k$-pebble transducer which computes the deatomization $\red f$, and assue that it satisfies the conditions from  Lemma~\ref{lem:honest}.  Let $r \in \set{1,2,\ldots}$ be the maximal quantier rank of \mso formulas defining reachability on the configurations of this pebble transducer. By the pigeon-hole principle, we can choose an atom representation 
    \begin{align*}
     \alpha : \atoms \to  \langle a^* \rangle,
    \end{align*}
    which is injective, where all atom blocks used to represent atoms are longer than the constant $n_0$ from Lemma~\ref{lem:honest} and also have the same \mso theory of quantifier rank $r$. Because all atom blocks are longer than $n_0$,  the output of of every atom run is equal to the atom block that is pointed to by the head in its first configuration thanks to condition~\ref{it:big-or-small} in Lemma~\ref{lem:honest}.

    For an input  string $w$ with atoms, we can injectively map the positions of $w$ to the positions of its deatomization $\alpha(w)$, by mapping atoms to the opening brackets of the corresponding atom blocks, as explained in the following picture 
    \mypic{23}
     If $\bar x$ is a tuple of distinguished positions in $w$, then  we write $\alpha(w,\bar x)$ for the string $\alpha(w)$ together with the distinguished positions that correspond to $\bar x$ under the injective map described above. 
    
    \begin{claim}\label{claim:pull-back-across-alpha} Let $w, \bar x$ be a string with atoms together with distinguished positions. The \mso theory of rank $r$  of $\alpha(w, \bar x)$ is  uniquely determined by the \mso theory of rank $r$ of $w,\bar x$.
    \end{claim}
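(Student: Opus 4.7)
The plan is to invoke the Feferman--Vaught style compositionality of \mso under string substitution. View $\alpha(w)$ as obtained from $w$ by substituting, at each atom-labelled position, a fixed atom block $\langle a^m \rangle$; a distinguished position of $\bar x$ that sits at an atom becomes a distinguished position at the opening bracket of the corresponding substituted block, while distinguished positions labelled from $\Sigma$ stay put. By the standard substitution theorem for \mso on strings, the \mso rank-$r$ theory of $\alpha(w, \bar x)$ is determined by two pieces of data: (a)~the \mso rank-$r$ theory of the ``skeleton'' $(w, \bar x)$, where atom positions are treated uniformly (i.e., we remember only which positions are atoms, not which atoms they are); and (b)~for each substituted block, the \mso rank-$r$ theory of that block with its marking, where the marking is either ``nothing distinguished'' or ``opening bracket distinguished''.

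For (b), the two possible markings are fixed across all atoms: all substituted blocks are of the form $\langle a^m \rangle$ and were chosen by $\alpha$ to share the same \mso rank-$r$ theory. Moreover, the opening bracket of an atom block is first-order definable within the block (as the unique position labelled $\langle$), so the rank-$r$ theory with the opening bracket distinguished can be recovered from the rank-$(r+c)$ theory without distinguished positions, for a small constant $c$ depending only on $r$. Provided $\alpha$ is chosen so that its atom blocks agree on \mso theories of rank $r+c$ rather than merely rank $r$ -- which is still obtainable by the same pigeonhole argument applied at a slightly higher rank -- piece~(b) is a universal constant depending only on the marking, not on the specific atom. Since the marking of each substituted block is determined by whether the corresponding atom position of $w$ lies in $\bar x$, all information needed to recover (b) is already present in (a).

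Combining (a) and (b) yields the claim. The main technical point is the substitution theorem itself, which is a routine Ehrenfeucht--Fra\"iss\'e argument: given a winning strategy for Duplicator in the $r$-round \mso game on two skeletons $(w_1, \bar x_1)$ and $(w_2, \bar x_2)$ that agree on \mso rank-$r$ theory, one lifts it to the $r$-round game on $\alpha(w_1, \bar x_1)$ versus $\alpha(w_2, \bar x_2)$ by using the common theory of atom blocks to coordinate moves that fall inside atom blocks, and using the skeleton strategy for moves that concern the overall layout. The mild obstacle is the bookkeeping around distinguished opening brackets, which we handle by the mild inflation of the quantifier rank described above; this is harmless because the rank used to pick $\alpha$ in the enclosing proof can simply be taken to be $r+c$ from the outset.
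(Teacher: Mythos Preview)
Your approach is essentially the same as the paper's: both appeal to compositionality of \mso{} under concatenation/substitution, using that all atom blocks chosen by $\alpha$ share the same rank-$r$ theory. The paper's own proof is literally one sentence (``Because all code blocks in $\alpha(w)$ have the same \mso{} theory of quantifier rank $r$''), so you have simply unpacked the Feferman--Vaught argument that the paper leaves implicit.

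Your caution about rank inflation is more careful than the paper. You are right that the composition theorem, in the form you invoke, needs the rank-$r$ theory of each piece \emph{with its marking}, and a marked block $(\langle a^n\rangle,\text{opening bracket})$ is not literally the same datum as the unmarked block $\langle a^n\rangle$. Your fix---choose $\alpha$ so that blocks agree at a slightly higher rank---is correct and harmless, since the pigeonhole argument used to pick $\alpha$ works just as well at rank $r+c$. One could also argue more directly: since the opening bracket is the unique $\langle$-labelled position, any Duplicator strategy in the unmarked game is forced to pair opening brackets with opening brackets on element moves, and the marked piece $(\langle a^n\rangle,\text{first})$ can be rewritten as the concatenation of the fixed one-letter marked string $(\langle,\text{marked})$ with the unmarked suffix $a^n\rangle$, whose rank-$r$ theory is constant once $\alpha$ is chosen appropriately. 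Either way the conclusion stands; the paper simply does not dwell on this bookkeeping.
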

    \begin{proof}
        Because all code blocks in $\alpha(w)$ have the same \mso theory of quantifier rank $r$.
    \end{proof}
    
    Using the above claim, we  define a new pebble transducer with atoms as required by the conclusion of the Deatomization Theorem. The states and their stack heights are the same as in the transducer from the assumption of the theorem.  The transitions in the new pebble transducer (with atoms) are designed so that for every input string $w$ (with atoms), if we apply $\alpha$ to the accepting run, then the result is the accepting run of  the original pebble transducer (without atoms) over the input string $\alpha(w)$; this can be done thanks to Claim~\ref{claim:pull-back-across-alpha}. The output instructions in the new pebble transducer are taken from the  original one.

\end{document}